\documentclass[11pt, letterpaper,reqno]{amsart}

\addtolength{\hoffset}{-1.95cm} \addtolength{\textwidth}{3.9cm}
\addtolength{\voffset}{-1.75cm}
\addtolength{\textheight}{2.4cm}

\usepackage{amssymb}
\usepackage{graphicx}
\usepackage[hyphens]{url}
\usepackage{hyperref}
\usepackage{appendix}
\usepackage{dsfont}

\numberwithin{equation}{section}

\newtheorem{theorem}{Theorem}[section]
\newtheorem{assumption}{Assumption}[section]

\newtheorem{lemma}{Lemma}[section]

\theoremstyle{definition}
\newtheorem{definition}{Definition}[section]
\theoremstyle{remark}
\newtheorem{remark}{Remark}[section]
\newtheorem{example}{Example}[section]

\newcommand{\leref}{Lemma~\ref}

\newcommand{\thref}{Theorem~\ref}
\newcommand{\reref}{Remark~\ref}
\newcommand{\asref}{Assumption~\ref}

\renewcommand{\P}{\mathbb{P}}
\newcommand{\Q}{\mathbb{Q}}
\newcommand{\R}{\mathbb{R}}
\newcommand{\E}{\mathbb{E}}

\newcommand{\cF}{\mathcal{F}}

\newcommand{\cM}{\mathcal{M}}

\newcommand{\cT}{\mathcal{T}}
\newcommand{\T}{\mathbb{T}}

\newcommand{\cX}{\mathcal{X}}
\newcommand{\cY}{\mathcal{Y}}

\newcommand{\cP}{\mathcal{P}}
\newcommand{\cQ}{\mathcal{Q}}
\newcommand{\A}{\mathcal{A}}

\newcommand{\eps}{\varepsilon}

\newcommand{\f}{\mathfrak{f}}

\newcommand{\EQUIV}{\Longleftrightarrow}
\newcommand{\cH}{\mathcal{H}}
\newcommand{\PPhi}{\Phi(H,a)}
\newcommand{\PPPhi}{\Phi_{\bar g,\bar h}(H,a,b,c,\mu)}
\newcommand{\cC}{\mathcal{C}}
\newcommand{\cD}{\mathcal{D}}
\newcommand{\bL}{\mathbb{L}}
\newcommand{\cI}{\mathcal{I}}
\newcommand{\cJ}{\mathcal{J}}
\newcommand{\cR}{\mathcal{R}}
\newcommand{\kg}{\mathfrak{g}}

\title[]{Arbitrage, hedging and utility maximization using semi-static trading strategies with American options}
\author[]{Erhan Bayraktar} \thanks{This research is supported in part by the National Science Foundation under grant DMS-0955463.}  
\address{Department of Mathematics, University of Michigan}
\email{erhan@umich.edu}
\author[]{Zhou Zhou}
\address{IMA, University of Minnesota}
\email{zhoux528@umn.edu}
\date{\today}
\keywords{Fundamental theorem of asset pricing, hedging duality, utility maximization, semi-static trading strategies, American options}
\begin{document}
\maketitle

\begin{abstract}
We consider a financial market where stocks are available for dynamic trading, and European and American options are available for static trading (semi-static trading strategies). We assume that the American options are infinitely divisible, and can only be bought but not sold. In the first part of the paper, we work within the framework without model ambiguity. We first get the fundamental theorem of asset pricing (FTAP). Using the FTAP, we get the dualities for the hedging prices of European and American options. Based on the hedging dualities, we also get the duality for the utility maximization. In the second part of the paper, we consider the market which admits non-dominated model uncertainty. We first establish the hedging result, and then using the hedging duality we further get the FTAP. Due to the technical difficulty stemming from the non-dominancy of the probability measure set, we use a discretization technique and apply the minimax theorem.
\end{abstract}


\section{Introduction}

The arbitrage, hedging, and utility maximization problems have been extensively studied in the field of financial mathematics. We refer to \cite{EQF,Sch2} and the references therein. Recently, there has been a lot of work on these three topics where stocks are traded dynamically and (European-style) options are traded statically (semi-static strategies, see e.g., \cite{Hobson1}). For example, \cite{Sch3,Mathias,nutz2,Hobson1} analyze the arbitrage and/or super-hedging in the setup of model free or model uncertainty, \cite{Sircar} investigates optimal hedging of barrier options under a given model, and \cite{Siorpaes} studies the utility maximization within a given model. It is worth noting that most of the literature related to semi-static strategies only consider European-style options as to be liquid options, and there are only a few papers incorporating American-style options for static trading. In particular, \cite{Campi} studies the completeness (in some $\bL^2$ sense) of the market where American put options of all the strike prices are available for semi-static trading, and \cite{Cox} studies the no arbitrage conditions on the price function of American put options where European and American put options are available.  

In this paper, we consider a financial market in discrete time consisting of stocks, (path-dependent) European options, and (path-dependent) American options (we also refer to these as hedging options), where the stocks are traded dynamically and European and American options are traded statically. We assume that the American options are infinitely divisible (i.e., we can break each unit American option into pieces, and exercise each piece separately), and we can only buy but not sell American options. 

In the first part of this paper, we consider the market without model ambiguity. We obtain the fundamental theorem of asset pricing (FTAP) under the notion of strict no arbitrage that is slightly stronger than no arbitrage in the usual sense. Then by the FTAP result, we further get dualities of the sub-hedging prices of European and American options. Using the duality result, we then study the utility maximization problem and get the duality of the value function.

In the second part of the paper, we work within the framework of model uncertainty. We use  the minimax theorem to prove the sub-hedging results for European and American options. From the sub-hedging dualities, we further get the FTAP result. Due to either the non-dominancy of the set of martingale measures, or the discontinuity of the payoff of the American options, we can not directly apply the minimax theorem in some steps of the proof. To overcome these technical difficulties, we first discretize the path space, and then apply the minimax theorem within the discretized space, and finally take a limit. A key assumption in this part is the weak compactness of some set of martingale measures with distribution constraints. This assumption is satisfied if we consider all the physical measures on a compact space, or if the liquid European options can compactify the set of martingale measures (see e.g., \cite{Sch3,Mathias}).

It is crucial to assume the infinite divisibility of the American options just like the stocks and European options. From a financial point of view, it is often the case that we can do strictly better when we break one unit of the American options into pieces and exercise each piece separately. In Section 2, we provide a motivating example in which without the divisibility assumption of the American option the no arbitrage condition holds yet there is no equivalent martingale measure (EMM) that prices the hedging options correctly. Moreover, we see in this example that the super-hedging price of the European option is not equal to the supremum of the expectation over all the EMMs which price the hedging options correctly. Mathematically, the infinite divisibility leads to the convexity and closedness of some related sets of random variables, which enables us either to apply the separating hyperplane argument to obtain the existence of an EMM that prices the options correctly in the case without model ambiguity, or to apply minimax theorem to get the sub-hedging duality in the case of model uncertainty.

The rest of the paper is organized as follows. In the next section, we will provide a motivating example. In Section 3, we shall introduce the setup and the main results of FTAP, sub-hedging, and utility maximization duality when there is no model ambiguity. We give the proof of these results in Section 4. In Section 5, we state the FTAP and sub-hedging results when the market admits non-dominated model uncertainty. Finally in Section 6, we demonstrate the proof of the model uncertainty version of the FTAP and sub-hedging results.

\section{A motivating example}
In this section, we shall look at an example of super-hedging of a European option using the stock and the American option. This example will motivate us to consider the divisibility of American options.
\begin{center}
\includegraphics[scale=0.17]{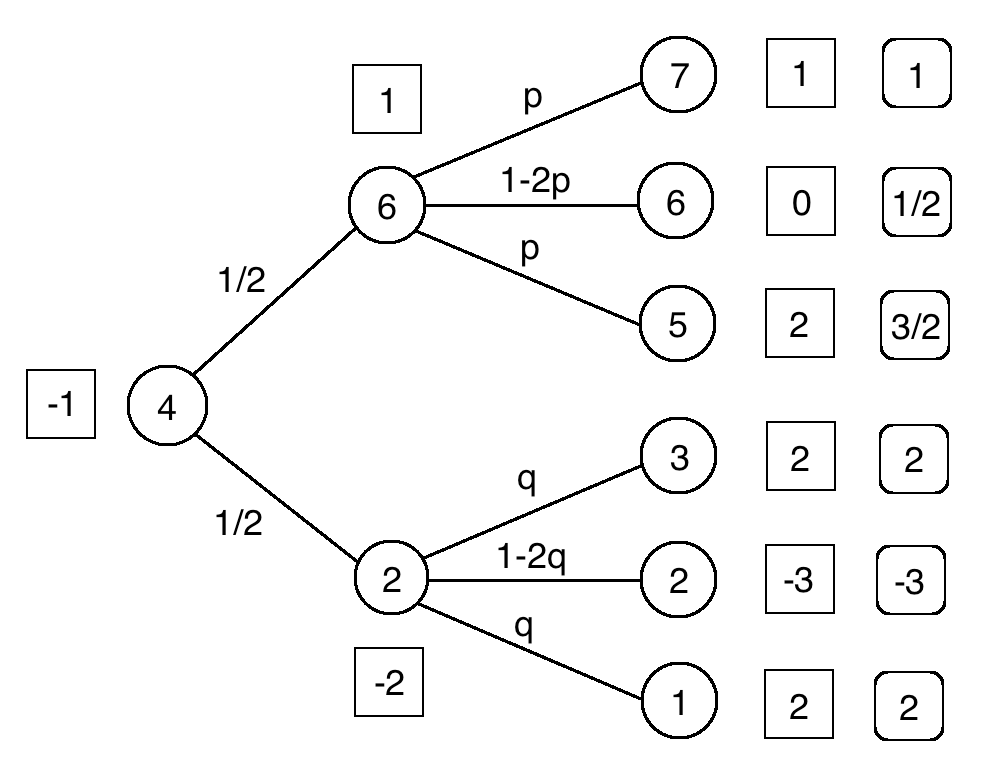}\ \ \ \ \ \ \ \ \ \ 
\includegraphics[scale=0.2]{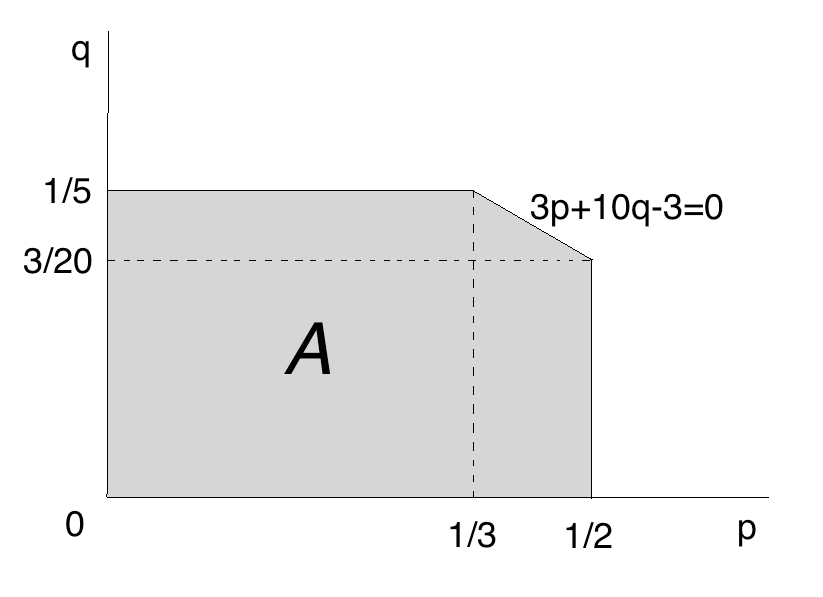}\\
\scriptsize{\textbf{Graph 1}\ \ \ \ \ \ \ \ \ \ \ \ \ \ \ \ \ \ \ \ \ \ \ \ \ \ \ \ \ \ \ \ \ \ \ \ \ \ \ \ \ \ \ \ \ \ \ \ \ \ \ \ \ \textbf{Graph 2}}
\end{center}
\vspace{0.2cm}

Consider a simple model given by Graph 1 above. The stock prices $S=(S_t)_{t=0,1,2}$, payoffs of the American option $h=(h_t)_{t=0,1,2}$, and payoffs of the European option $\psi$ are indicated by the numbers in the circles, squares with straight corners, and squares with rounded corners, respectively. Let $(\Omega,\mathcal{B}(\Omega))$ be the path space indicated by Graph 1, and let $(\mathcal{F}_t)_{t=0,1,2}$ be the filtration generated by $S$. Let $\P$ be a probability measure that is supported on $\Omega$. Hence any EMM would be characterized by the pair $(p,q)$ shown in Graph 1 with $0<p,q<1/2$. 

We assume that the American option $h$ can only be bought at time $t=0$ with price $\bar h=0$. Then in order to avoid arbitrage involving stock $S$ and American option $h$, we expect that the set
$$\cQ:=\left\{\Q\ \text{is an EMM}:\ \sup_{\tau\in\cT}\E_\Q h_\tau\leq 0\right\}$$
is not empty, where $\cT$ represents the set of stopping times. Equivalently, to avoid arbitrage, the set
$$A:=\left\{(p,q)\in\left(0,\frac{1}{2}\right)\times\left(0,\frac{1}{2}\right):\ \left(\frac{1}{2}[(3p)\vee 1]+\frac{1}{2}[(10q-3)\vee(-2)]\right)\vee(-1)\leq 0\right\}$$
should be nonempty. In Graph 2 above $A$ is indicated by the shaded area, which shows that $A\neq\emptyset$. 

Now consider the super-hedging price $\bar\pi(\psi)$ of the European option $\psi$ using semi-static trading strategies. That is,
$$\bar\pi(\psi):=\inf\{x:\ \exists(H,c,\tau)\in\mathcal{H}\times\R_+\times\cT,\text{ s.t. }x+H\cdot S+ch_\tau\geq\psi,\ \P-\text{a.s.}\},$$
where $\mathcal{H}$ is the set of adapted processes, and $H\cdot S=\sum_{t=0}^1H_t(S_{t+1}-S_t)$. One may expect that the super-hedging duality would be given by
$$\bar\pi(\psi)=\sup_{\Q\in\mathcal{Q}}\E_\Q\psi.$$
By calculation,
$$\sup_{\Q\in\mathcal{Q}}\E_\Q\psi=\sup_{(p,q)\in A}\left(\frac{3}{4}p+5q-\frac{5}{4}\right)=\left(\frac{3}{4}p+5q-\frac{5}{4}\right)\bigg|_{(\frac{1}{3},\frac{1}{5})}=0.$$
On the other hand, it can be shown that
\begin{eqnarray}
\notag \bar\pi(\psi)&=&\inf_{\tau\in\cT}\inf_{c\in\R_+}\inf\{x:\ \exists H\in\cH, \text{ s.t. } x+H\cdot S\geq\psi-ch_\tau\}\\
\notag &=&\inf_{\tau\in\cT}\inf_{c\in\R_+}\sup_{\Q\in\mathcal{M}}\E_\Q[\psi-ch_\tau]\\
\notag &=&\frac{1}{8},
\end{eqnarray}
where $\cM$ is the set of EMMs. Here we use the classical result of super-hedging for the second line, and the value in the third line can be calculated by brute force since we only have five stopping times.\footnote{For example, when
\begin{displaymath}
   \tau = \left\{
     \begin{array}{lr}
       2, & S_1=6,\\
       1, & S_1=2,
     \end{array}
   \right.
\end{displaymath} 
then
$$\inf_{c\in\R_+}\sup_{\Q\in\cM}\E_\Q[\psi-ch_\tau]=\inf_{c\geq 0}\sup_{0<p,q<\frac{1}{2}}\left[\left(\frac{3}{4}-\frac{3}{2}c\right)p+5q-\frac{5}{4}+c\right]=\frac{13}{8}$$
}
\textbf{Therefore, the super hedging price is strictly bigger than the sup over the EMMs $\Q\in\mathcal{Q}$}, i.e.,
$$\bar\pi(\psi)>\sup_{\Q\in\mathcal{Q}}\E_\Q\psi.$$
As a consequence, if we add $\psi$ into the market, and assume that we can only sell $\psi$ at $t=0$ with price $\underline\psi=1/16$ ($>0=\sup_{\Q\in\mathcal{Q}}\E_\Q\psi$), then the market would \textbf{admit no arbitrage, yet there is no $Q\in\cQ$, such that $\E_\Q[\psi]\geq\underline\psi$}. 

However, observe that $\psi=\frac{1}{2}(h_{\tau_{12}}+h_2)$, where
\begin{displaymath}
   \tau_{12} = \left\{
     \begin{array}{lr}
       1, & S_1=6,\\
       2, & S_1=2.
     \end{array}
   \right.
\end{displaymath}
This suggests that if we assume that $h$ is infinitely divisible, i.e., we can break one unit of $h$ into pieces, and exercise each piece separately, then we can show that {the super-hedging price of $\psi$ is $\sup_{\Q\in\mathcal{Q}}\E_\Q\psi=0$}. Now if we add $\psi$ into the market with selling price $\underline\psi<0$, then we can find $\Q\in\cQ$, such that $\E_\Q\psi>\underline\psi$.

\section{Setup and main results without model ambiguity}\label{s3}
In this section, we first describe the setup of our financial model (without model uncertainty). In particular, as suggested by the example in the last section, we shall assume that the American options are divisible. Then we shall provide the main results, including \thref{t1} for FTAP, \thref{t2} for sub-hedging, and \thref{t3} for utility maximization.

\subsection{Setup}
Let $(\Omega,\cF,(\cF_t)_{t=0,1,\dotso,T},\P)$ be a filtered probability space, where $\cF$ is assumed to be separable, and $T\in\mathbb{N}$ represents the time horizon in discrete time. Let $S=(S_t)_{t=0,\dotso,T}$ be an adapted process taking values in $\R^d$ which represents the stock price process. Let $f^i,g^j:\ \Omega\mapsto\R$ be $\cF_T$-measurable, representing the payoffs of European options, $i=1,\dotso,L$ and $j=1,\dotso,M$. We assume that we can buy \textit{and} sell each $f^i$ at time $t=0$ at price $\bar f^i$, and we can only buy but \textit{not} sell each $g^j$ at time $t=0$ with price $\bar g^j$.  Let $h^k=(h^k_t)_{t=0,\dotso,T}$ be an adapted process, representing the payoff process of an American option, $k=1,\dotso,N$. We assume that we can only buy but \textit{not} sell each $h^k$ at time $t=0$ with price $\bar h^k$. Denote $f=(f^1,\dotso,f^L)$ and $\bar f=(\bar f^1,\dotso, \bar f^L)$, and similarly for $g,\bar g, h$ and $\bar h$. For simplicity, we assume that $g$ and $h$ are bounded.

\begin{remark}
Here $g$ may represent the European options whose trade is quoted with bid-ask spreads. This is without loss of generality, since for any European option $\mathfrak{g}$ with bid price $\underline{\mathfrak{g}}$ and ask price $\overline{\mathfrak{g}}$, we can treat the option as two European options $\mathfrak{g}^1=-\mathfrak{g}$ and $\mathfrak{g}^2=\mathfrak{g}$ which can only be bought at price $-\underline{\mathfrak{g}}$ and $\overline{\mathfrak{g}}$ respectively. As for the American options $h$, we restrict ourself to only buy $h$. This is because if we sell American options, we will confront the risk of not knowing when the American options will be exercised. Moreover, if American options are sold, we need to consider non-anticipating trading strategies, and the problems will become much more complicated (see e.g., \cite{BHZ,ZZ7,ZZ6}). 
\end{remark}

\begin{definition}
An adapted process $\eta=(\eta_t)_{t=0,\dotso,T}$ is said to be a liquidating strategy, if $\eta_t\geq 0$ for $t=0,\dotso,T$, and
\begin{equation}\label{e75}
\sum_{t=1}^T{\eta_t}=1,\quad\P\text{-a.s.}.
\end{equation}
Denote $\T$ as the set of all liquidating strategies.
\end{definition}
\begin{remark}\label{r1}
Let us also mention the related concept of a randomized stopping time, which is a random variable $\gamma$ on the enlarged probability space $(\Omega\times[0,1],\cF\otimes\mathcal{B},\P\times\lambda)$, such that $\{\gamma=t\}\in\cF_t\otimes\mathcal{B}$ for $t=0,\dotso,T$, where $\mathcal{B}$ is the Borel sigma algebra on $[0,1]$ and $\lambda$ is the Lebesgue measure.  Let $\T'$ be the set of randomized stopping times. For $\gamma\in\T'$, its $\omega$-distribution $\xi=(\xi_t)_{t=0,\dotso,T}$ defined by
$$\xi_t(\cdot)=\lambda\{v:\ \gamma(\cdot,v)=t\},\quad t=0,\dotso,T,$$
is a member in $\T$. There is one-to-one correspondence between $\T$ and $\T'$ (up to a increasing rearrangement). We refer to \cite{Edgar} for these facts.

In spite of the one-to-one correspondence, the paths of a liquidating strategy and a randomized stopping time are quite different. A randomized stopping time is the strategy of flipping a coin to decide which stopping time to use (so the whole unit is liquidated only once), while a liquidating strategy is an exercising flow (so different parts of the whole unit are liquidated at different times).

Because of this difference, \thref{t1} (FTAP), \thref{t2} (hedging duality) and \thref{t3} (utility maximization duality) will not hold if we replace liquidating strategies with randomized stopping times. (For randomized stopping times, one may still consider FTAP and hedging on the enlarged probability space, and the results would be different.) For instance, in the example from last section, unlike liquidating strategies, we cannot merely use $h$ to super-hedge $\psi$ (on the enlarged probability space) via any randomized stopping time.  See \reref{r2} for more explanation for the case of utility maximization.     
\end{remark}

For each $\eta\in\T$ and American option $h^k$, denote $\eta(h^k)$ as the payoff of $h^k$ by using the liquidating strategy $\eta$. That is,
$$\eta(h^k)=\sum_{t=0}^T h^k_t \eta_t.$$
For $\mu=(\mu^1,\dotso,\mu^N)\in\T^N$, denote
$$\mu(h)=(\mu^1(h^1),\dotso,\mu^N(h^N)).$$
Let $\mathcal{H}$ be the set of adapted processes which represents the dynamical trading strategies for stocks. Let $(H\cdot S)_t:=\sum_{t=0}^{T-1}H_t(S_{t+1}-S_t)$, and denote $H\cdot S$ for $(H\cdot S)_T$ for short. For a semi-static trading strategy $(H,a,b,c,\mu)\in\mathcal{H}\times\R^L\times\R_+^M\times\R_+^N\times\T^N$, the terminal value of the portfolio starting from initial wealth 0 is given by
$$\PPPhi:=H\cdot S+a(f-\bar f)+b(g-\bar g)+c(\mu(h)-\bar h),$$
where $f-\bar f:=(f^1-\bar f^1,\dotso,f^L-\bar f^L)$, and $af$ represents the inner product of $a$ and $f$, and similarly for the other related terms. For $(H,a)\in\mathcal{H}\times\R^L$ we shall also use the notation
$$\PPhi:=H\cdot S+a(f-\bar f)$$
for short. From now on, when we write out the quintuple such as $(H,a,b,c,\mu)$, they are by default in $\mathcal{H}\times\R^L\times\R_+^M\times\R_+^N\times\T^N$ unless we specifically point out, and similarly for $(H,a)$. 

\subsection{Fundamental theorem of asset pricing}
\begin{definition}
We say no arbitrage (NA) holds w.r.t. $\bar g$ and $\bar h$, if for any $(H,a,b,c,\mu)$,
$$\PPPhi\geq 0\quad\P\text{-a.s.}\implies\PPPhi= 0\quad\P\text{-a.s.}.$$
We say strict no arbitrage (SNA) holds, if there exists $\eps_g\in(0,\infty)^M$ and $\eps_h\in(0,\infty)^N$ (from now on we shall use $\eps_g,\eps_h>0$ for short), such that NA holds w.r.t. $\bar g-\eps_g$ and $\bar h-\eps_h$.
\end{definition}

Define
$$\cQ:=\{\Q\ \text{is an EMM}:\ \E_\Q f=\bar f,\ \E_\Q g<\bar g,\ \sup_{\tau\in\mathcal{T}}\E_\Q h_\tau<\bar h\},$$
where $\cT$ is the set of stopping times, $\sup_{\tau\in\mathcal{T}}\E_\Q h_\tau:=(\sup_{\tau\in\mathcal{T}}\E_\Q h^1_\tau,\dotso,\sup_{\tau\in\mathcal{T}}\E_\Q h^N_\tau)$, and the expectation and equality/inequality above are understood in a component-wise sense.

\begin{theorem}[FTAP]\label{t1}
SNA $\EQUIV\cQ\neq\emptyset$.
\end{theorem}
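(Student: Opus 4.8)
The plan is to prove the two implications separately. The direction $\cQ\neq\emptyset\Rightarrow\mathrm{SNA}$ is a short verification; the direction $\mathrm{SNA}\Rightarrow\cQ\neq\emptyset$ is the substantive one, and I will carry it out by separating $\{0\}$ from a convex cone of super-hedgeable payoffs at slightly reduced option prices, the crux being the closedness of that cone.

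For $\cQ\neq\emptyset\Rightarrow\mathrm{SNA}$: fix $\Q\in\cQ$. Since the inequalities defining $\cQ$ are strict and there are finitely many options, choose $\eps_g,\eps_h>0$ with $\E_\Q g\le\bar g-\eps_g$ and $\sup_\tau\E_\Q h_\tau\le\bar h-\eps_h$. Let $(H,a,b,c,\mu)$ yield a portfolio value $\Psi:=H\cdot S+a(f-\bar f)+b(g-(\bar g-\eps_g))+c(\mu(h)-(\bar h-\eps_h))\ge0$ $\P$-a.s. The $g$- and $h$-terms are bounded and $f\in L^1(\Q)$, so $H\cdot S$ is bounded below by a $\Q$-integrable random variable, hence is a $\Q$-supermartingale with $\E_\Q[H\cdot S]\le0$ (this is the only place a routine integrability reduction is used). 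Using $\E_\Q f=\bar f$, $b,c\ge0$, the estimate $\E_\Q[\mu^k(h^k)]=\E_\Q[\sum_t h^k_t\mu^k_t]\le\sup_\tau\E_\Q h^k_\tau$ (randomization does not raise a single expectation; cf.\ Remark~\ref{r1}), and $\eps_g,\eps_h>0$, we obtain $\E_\Q[\Psi]\le0$; since $\Psi\ge0$ and $\Q\sim\P$, this forces $\Psi=0$ $\P$-a.s. Thus NA holds w.r.t.\ $\bar g-\eps_g$ and $\bar h-\eps_h$, i.e.\ SNA holds.

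For $\mathrm{SNA}\Rightarrow\cQ\neq\emptyset$: fix $\eps_g,\eps_h>0$ from SNA and set $\tilde g:=\bar g-\eps_g$, $\tilde h:=\bar h-\eps_h$; it suffices to find an EMM $\Q\sim\P$ with $\E_\Q f=\bar f$, $\E_\Q g\le\tilde g$ and $\sup_\tau\E_\Q h_\tau\le\tilde h$, for then $\E_\Q g<\bar g$ and $\sup_\tau\E_\Q h_\tau<\bar h$, i.e.\ $\Q\in\cQ$. Passing first to an equivalent measure (which changes neither NA nor this goal), assume $S_t,f^i\in L^1(\P)$ ($g,h$ are already bounded). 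Consider the convex cone
\[\cC:=\big\{\Phi_{\tilde g,\tilde h}(H,a,b,c,\mu):(H,a,b,c,\mu)\big\}-L^0_+(\P)\subseteq L^0(\P).\]
The infinite divisibility of the American options is exactly what makes each set $\{c_k(\mu^k(h^k)-\tilde h^k):c_k\ge0,\mu^k\in\T\}$ a \emph{convex} cone (the sum of $c_i\mu_i(h^k)$, $i=1,2$, equals $(c_1+c_2)\mu(h^k)$ with $\mu=(c_1\mu_1+c_2\mu_2)/(c_1+c_2)\in\T$), so $\cC$ is indeed a convex cone, and NA w.r.t.\ $\tilde g,\tilde h$ says exactly $\cC\cap L^0_+(\P)=\{0\}$. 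Granting that $\cC$ is closed in $L^0(\P)$, the cone $\cA:=\cC\cap L^1(\P)$ is a closed convex cone in $L^1(\P)$ containing $-L^1_+(\P)$ with $\cA\cap L^1_+(\P)=\{0\}$, so the Kreps--Yan theorem (applicable since $\cF$ separable makes $L^1(\P)$ separable) yields $Z\in L^\infty(\P)$, $Z>0$ $\P$-a.s., with $\E[ZW]\le0$ for all $W\in\cA$. Put $d\Q:=(Z/\E[Z])\,d\P$. The two-sided directions $H\cdot S$ and $\pm(f^i-\bar f^i)$ force $S$ to be a $\Q$-martingale and $\E_\Q f=\bar f$; the one-sided $b^j(g^j-\tilde g^j)$, $b^j\ge0$, give $\E_\Q g\le\tilde g$; and $c(\mu^k(h^k)-\tilde h^k)$, $c\ge0$, $\mu^k\in\T$ arbitrary, give $\sup_\tau\E_\Q h^k_\tau\le\sup_{\mu^k\in\T}\E_\Q[\mu^k(h^k)]\le\tilde h^k$. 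Hence $\Q\in\cQ$.

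The main obstacle is the closedness of $\cC$ in $L^0(\P)$. It is delicate for two reasons. First, $S$ and $f$ need not be bounded, so the $H\cdot S$-component needs the classical Dalang--Morton--Willinger truncation and measurable-selection machinery (legitimate since full NA implies NA for $S$ alone, whence $\{H\cdot S:H\in\cH\}-L^0_+$ is $L^0$-closed). Second, and more seriously, the American cone $\{c\,\mu^k(h^k):c\ge0,\mu^k\in\T\}$ cannot be expected to be closed on its own: although $\T$ is weak$^*$-compact and $\mu\mapsto\mu(h^k)$ is weak$^*$-continuous (as $h^k$ is bounded), so $\{\mu(h^k):\mu\in\T\}$ is compact convex, the cone it generates can fail to be closed when $0$ lies on the boundary of this set — and likewise for its Minkowski sum with the stock cone, with the finite-dimensional cone indexed by $(a,b)\in\R^L\times\R_+^M$, and with $-L^0_+$. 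The remedy is the familiar no-arbitrage ``no profitable direction at infinity'' argument: given $W_n\in\cC$ with $W_n\to W$ in probability, if the norm of the coefficient block $(H_n,a_n,b_n,c_n)$ stays bounded one extracts convergent subsequences (weak$^*$ for $\mu_n\in\T^N$, by compactness) and passes to the limit; if it diverges, one renormalizes by that norm, extracts a nonzero limiting strategy with payoff $\ge0$, which by NA vanishes identically, subtracts it off to reduce the number of active coordinates, and iterates until the bounded case applies. Executing this induction cleanly — interlacing the finite-dimensional $(a,b,c)$ with the possibly unbounded $H$ and the infinite-dimensional $\mu$, and controlling the drift of $\mu_n(h)$ to the boundary when $c_n\to\infty$ — is the technical heart of the proof.
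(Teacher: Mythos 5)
Your overall plan---reduce to NA at prices $\bar g-\eps_g$, $\bar h-\eps_h$, form the convex cone generated by semi-static portfolios at those reduced prices minus nonnegative consumption, prove closedness, separate via Kreps--Yan, and read off that the resulting measure lies in $\cQ$---is the same as the paper's, and your verification of the easy direction and of the convexity coming from divisibility are fine. The genuine gap is in the closedness argument for the unbounded case, which you yourself flag as ``the technical heart.''

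You propose the classical Dalang--Morton--Willinger iteration: normalize by the coefficient norm, extract a nonzero limit strategy whose payoff is a.s.\ $\geq 0$ hence $\equiv 0$ by NA, subtract it off to reduce the number of active coordinates, and iterate. This subtraction does not go through for the American block. If after normalization $(c')_{k_0}=1$ and $\mu'\in\T^N$ is the weak-star limit of $\mu_n$, the corrected $k$-th American position $(c_n)_k(\mu_n)^k-\alpha_n(c')_k(\mu')^k$ would have to be a nonnegative multiple of some element of $\T$; this requires the pointwise domination $(c_n)_k(\mu_n)^k_t\geq\alpha_n(c')_k(\mu')^k_t$ for all $t$, which nothing in the setup guarantees, since $\R_+^N\times\T^N$ is a convex cone but not a linear space. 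So the ``reduce and iterate'' loop cannot be closed as stated.

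The paper avoids iteration entirely by an $\eps$-halving device. It defines the cone at prices $\bar g-\eps_g/2$, $\bar h-\eps_h/2$ (not at $\bar g-\eps_g$, $\bar h-\eps_h$). In the unbounded case one normalizes, uses weak-star sequential compactness of $\T^N$ together with the closedness of the stock-plus-$f$ cone (cited from \cite{nutz2}), and obtains a limit strategy $(H',a',b',c',\mu')$ with $\Phi_{\bar g-\eps_g/2,\bar h-\eps_h/2}(H',a',b',c',\mu')\geq 0$ a.s.\ and $\|(b',c')\|=1$ with $b',c'\geq 0$. Then
\[
\Phi_{\bar g-\eps_g,\bar h-\eps_h}(H',a',b',c',\mu')=\Phi_{\bar g-\eps_g/2,\bar h-\eps_h/2}(H',a',b',c',\mu')+\tfrac{1}{2}\bigl(b'\eps_g+c'\eps_h\bigr)\geq\tfrac{1}{2}\min_j\bigl(\eps_g^j\wedge\eps_h^j\bigr)>0
\]
surely, a deterministic arbitrage contradicting NA at $\bar g-\eps_g$, $\bar h-\eps_h$. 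The slack converts the soft conclusion ``limit payoff vanishes a.s.'' into a hard arbitrage, so no reduction step is needed. (The paper also works with the cone intersected with $L^\infty$ under the weak-star topology, invoking the Krein--\v{S}mulian/Conway step and the abstract version of Schachermayer's Theorem 1.1, rather than in $L^0/L^1$ as you propose; both can be made to work, but the $L^\infty$ route shortens the integrability bookkeeping and lets weak-star compactness of $\T^N$ via Banach--Alaoglu do the work for $\mu_n$.) The minimal fix to your proposal is to take your cone $\cC$ at $\bar g-\eps_g/2$, $\bar h-\eps_h/2$ and run the above contradiction in the unbounded case instead of the DMW subtraction.
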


\subsection{Sub-hedging} Let $\psi:\ \Omega\mapsto\R$ be $\cF_T$-measurable, which represents the payoff of a European option. Let $\phi=(\phi_t)_{t=0,\dotso,T}$ be an adapted process, representing the payoff process of an American option. For simplicity, we assume that $\psi$ and $\phi$ are bounded. Define the sub-hedging price of $\psi$
$$\pi_{eu}(\psi):=\sup\{x:\ \exists(H,a,b,c,\mu),\ \text{s.t.}\ \PPPhi+\psi\geq x\},$$
and the sub-hedging price of $\phi$
$$\pi_{am}(\phi):=\sup\{x:\ \exists(H,a,b,c,\mu)\ \text{and}\ \eta\in\T,\ \text{s.t.}\ \PPPhi+\eta(\phi)\geq x\}.$$

\begin{theorem}[Sub-hedging]\label{t2}
Let SNA hold. Then
\begin{equation}\label{e1}
\pi_{eu}(\psi)=\inf_{\Q\in\cQ}\E_\Q\psi,
\end{equation}
and
\begin{equation}\label{e2}
\pi_{am}(\phi)=\inf_{\Q\in\cQ}\sup_{\tau\in\cT}\E_\Q\phi_\tau.
\end{equation}
Moreover, there exists $(H^*,a^*,b^*,c^*,\mu^*)$ such that 
$$\Phi_{\bar g,\bar h}(H^*,a^*,b^*,c^*,\mu^*)+\psi\geq\pi_{eu}(\psi),$$
and there exists $(H^{**},a^{**},b^{**},c^{**},\mu^{**})$ and $\eta^{**}\in\T$ such that 
\begin{equation}\label{e4}
\Phi_{\bar g,\bar h}(H^{**},a^{**},b^{**},c^{**},\mu^{**})+\eta^{**}(\phi)\geq\pi_{am}(\phi).
\end{equation}
\end{theorem}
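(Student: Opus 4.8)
The plan is to reduce everything to \thref{t1} together with a convexity/closedness property of the set of hedgeable payoffs, treating the European claim $\psi$ first and then bootstrapping to the American claim $\phi$ by a minimax argument. Write $\cC$ for the set of $\cF_T$-measurable $G$ with $G\le\PPPhi$ for some semi-static strategy. The infinite divisibility of the $h^k$ makes $\T^N$, hence $\{\mu(h):\mu\in\T^N\}$, hence $\cC$, convex, and under SNA the set $\cC$ is closed in probability; I would take this closedness directly from the proof of \thref{t1}, or re-derive it by the same discrete-time (Dalang--Morton--Willinger type) argument, since it is exactly what the separation step of \thref{t1} rests on. Also, SNA gives $\cQ\neq\emptyset$ by \thref{t1}, so fix once and for all some $\Q_0\in\cQ$. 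Weak duality is then immediate: for $\Q\in\cQ$, $\E_\Q[H\cdot S]=0$, $\E_\Q[a(f-\bar f)]=0$, $\E_\Q[b(g-\bar g)]\le 0$ and $\E_\Q[c(\mu(h)-\bar h)]\le 0$, the last because $\E_\Q[\mu^k(h^k)]=\E_\Q[\sum_t h^k_t\mu^k_t]\le\sup_{\tau\in\cT}\E_\Q h^k_\tau$ (a liquidating strategy cannot beat the Snell envelope of $h^k$ in $\Q$-expectation); hence $\E_\Q[\PPPhi]\le 0$, so $\PPPhi+\psi\ge x$ forces $x\le\E_\Q\psi$ and $\PPPhi+\eta(\phi)\ge x$ forces $x\le\E_\Q[\eta(\phi)]\le\sup_{\tau\in\cT}\E_\Q\phi_\tau$, giving ``$\le$'' in \eqref{e1} and \eqref{e2}. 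In particular $\pi_{eu}(\psi)\le\E_{\Q_0}\psi<\infty$, while trivially $\pi_{eu}(\psi)\ge\essinf\psi>-\infty$.

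For the European claim I would get attainment and ``$\ge$'' in \eqref{e1} as follows. Since $x$ sub-hedges $\psi$ iff $x-\psi\in\cC$, we have $\pi_{eu}(\psi)=\sup\{x:x-\psi\in\cC\}$; choosing $x_n\uparrow\pi_{eu}(\psi)$ with $x_n-\psi\in\cC$ and noting $x_n-\psi\to\pi_{eu}(\psi)-\psi$ in probability, closedness of $\cC$ yields $\pi_{eu}(\psi)-\psi\in\cC$, which is precisely the existence of the optimal $(H^*,a^*,b^*,c^*,\mu^*)$. For strong duality, suppose $\pi_{eu}(\psi)<\inf_{\Q\in\cQ}\E_\Q\psi$ and pick $x$ strictly between; then $x-\psi\notin\cC$, and the separation argument underlying \thref{t1}, applied to the closed convex set $\cC$ and the point $x-\psi$, produces $\Q'\ll\P$ under which $S$ is a martingale, $\E_{\Q'}f=\bar f$, $\E_{\Q'}g\le\bar g$, $\sup_{\tau\in\cT}\E_{\Q'}h_\tau\le\bar h$ and $\E_{\Q'}\psi<x$ (the weak inequalities coming from the half-lines $\{b(g-\bar g):b\ge0\}$ and $\{c(\mu(h)-\bar h):c\ge0,\ \mu\in\T^N\}$ sitting inside $\cC$, and $\sup_\mu\E_{\Q'}\mu^k(h^k)=\sup_\tau\E_{\Q'}h^k_\tau$). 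Replacing $\Q'$ by $\Q_\delta:=(1-\delta)\Q'+\delta\Q_0$ for small $\delta>0$ restores equivalence to $\P$ and the strict inequalities for $g$ and $h$ while preserving $\E_{\Q_\delta}\psi<x$, so $\Q_\delta\in\cQ$, contradicting $x<\inf_{\Q\in\cQ}\E_\Q\psi$. Hence \eqref{e1}.

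For the American claim, note first that for each $\eta\in\T$ the payoff $\eta(\phi)=\sum_t\phi_t\eta_t$ is bounded and $\cF_T$-measurable, so \eqref{e1} gives $\pi_{eu}(\eta(\phi))=\inf_{\Q\in\cQ}\E_\Q[\eta(\phi)]$ together with an attaining strategy; and directly from the definitions $\pi_{am}(\phi)=\sup_{\eta\in\T}\pi_{eu}(\eta(\phi))$, so $\pi_{am}(\phi)=\sup_{\eta\in\T}\inf_{\Q\in\cQ}\E_\Q[\eta(\phi)]$. The map $(\eta,\Q)\mapsto\E_\Q[\eta(\phi)]$ is affine in each argument, $\cQ$ is convex, $\T$ is convex (again by divisibility) and compact in the weak topology coming from the correspondence with randomized stopping times of \reref{r1}, with $\eta\mapsto\E_\Q[\eta(\phi)]$ continuous there, so the minimax theorem gives $\pi_{am}(\phi)=\inf_{\Q\in\cQ}\sup_{\eta\in\T}\E_\Q[\eta(\phi)]=\inf_{\Q\in\cQ}\sup_{\tau\in\cT}\E_\Q\phi_\tau$, which is \eqref{e2}. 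For \eqref{e4}: $\eta\mapsto\inf_{\Q\in\cQ}\E_\Q[\eta(\phi)]$ is upper semicontinuous (an infimum of continuous functions), so by compactness of $\T$ the outer supremum is attained at some $\eta^{**}\in\T$, and applying the European attainment to the payoff $\eta^{**}(\phi)$ gives $(H^{**},a^{**},b^{**},c^{**},\mu^{**})$ with $\Phi_{\bar g,\bar h}(H^{**},a^{**},b^{**},c^{**},\mu^{**})+\eta^{**}(\phi)\ge\inf_{\Q\in\cQ}\E_\Q[\eta^{**}(\phi)]=\pi_{am}(\phi)$.

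The main obstacle, as I see it, is the convexity-plus-closedness of $\cC$ (equivalently, the compactness of $\T$ and closedness of $\{\PPPhi+\eta(\phi):\eta\in\T\}$): this is the single place where infinite divisibility is genuinely used, it is what makes both the separating-hyperplane step for \eqref{e1} and the minimax step for \eqref{e2} legitimate, and — because $L^0(\P)$ is not locally convex — it is likely to require the careful discrete-time machinery (change of measure to reduce to an $L^1$-separation, exhaustion to obtain an equivalent measure) rather than an off-the-shelf Hahn--Banach. The remaining convex-combination, perturbation, and semicontinuity arguments are routine by comparison, and are the parts I would write out only briefly.
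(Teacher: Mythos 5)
Your proposal is correct in outline but takes a genuinely different route from the paper's, and the comparison is worth spelling out. The paper proves \eqref{e1} and \eqref{e2} by an ``add $\psi$ (resp.\ $\phi$) to the market'' trick: if $\pi_{am}(\phi)<\inf_{\Q\in\cQ}\sup_\tau\E_\Q\phi_\tau$, pick $\bar\phi$ strictly in between, adjoin $\phi$ as a new buy-only option at price $\bar\phi$, observe that $\bar\phi>\pi_{am}(\phi)$ forces SNA to persist in the enlarged market, and then invoke \thref{t1} for the enlarged market to get $\Q\in\cQ$ with $\sup_\tau\E_\Q\phi_\tau<\bar\phi$, a contradiction. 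You instead redo the separation directly against the set $\cC$ of sub-hedgeable payoffs and then renormalize with a fixed $\Q_0$. Both arguments rest on the same Kreps--Yan closure, but the paper's re-application of FTAP is cleaner because it never needs to extract a separating functional with the extra inequality $\E_{\Q'}\psi<x$; it just re-uses \thref{t1} verbatim. Your route works but is heavier, and you should be aware that ``closedness of $\cC$ in probability'' is not literally what the FTAP proof gives: the paper proves weak-star closedness of $\cJ=\{\Phi_{\bar g-\eps_g/2,\bar h-\eps_h/2}(\cdot)-W\}\cap\bL^\infty$. Fortunately your attainment step only needs closedness along the sequence $x_n-\psi\to\pi_{eu}(\psi)-\psi$, which converges in $\bL^\infty$ and hence weak-star, so this is fine once you phrase it as weak-star closedness of $\cC\cap\bL^\infty$ (with the same Case~(i)/(ii) dichotomy, using SNA instead of NA w.r.t.\ $\bar g-\eps_g/2,\bar h-\eps_h/2$ for Case~(ii)).

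For the American part your decomposition $\pi_{am}(\phi)=\sup_{\eta\in\T}\pi_{eu}(\eta(\phi))=\sup_\eta\inf_\Q\E_\Q[\eta(\phi)]$ followed by minimax is correct (Ky Fan's theorem applies: $\T$ is compact convex under Baxter--Chacon, $\cQ$ is convex, the pairing is bi-affine and $\eta\mapsto\E_\Q[\eta(\phi)]$ is Baxter--Chacon continuous; compactness or a topology on the $\cQ$-side is not needed). The paper in fact records exactly this chain in Remark~3.3 but obtains it as a corollary rather than using minimax in the proof. For attainment the two approaches diverge most: the paper proves a Lipschitz estimate for $F(b,c)=\sup_{\mu,\eta}\inf_{\Q\in\cQ_f}\E_\Q[\cdot]$, shows the sup in $(b,c)$ can be restricted to a compact ball, then optimizes over $(\mu,\eta)$ by Baxter--Chacon compactness, and finally pulls out $(H^{**},a^{**})$ from the superhedging theorem of \cite{nutz2}; you instead get $\eta^{**}$ from compactness of $\T$ and upper semicontinuity, and get the rest of the optimal strategy from the European attainment via closedness of $\cC\cap\bL^\infty$. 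Your route is more structural and shorter once closedness is in hand; the paper's is more explicit and avoids invoking closedness of $\cC$ a second time. Both are valid, so the verdict is: correct, different, and the only thing you should tighten is the precise sense in which $\cC$ is closed and the details of the separation step for \eqref{e1}, which the paper sidesteps by re-applying FTAP rather than separating a point from $\cC$.
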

\begin{remark}
In fact, from the proof of \thref{t2} we have that
$$\pi_{am}(\phi)=\sup_{\eta\in\T}\inf_{\Q\in\cQ}\E_\Q[\eta(\phi)]=\inf_{\Q\in\cQ}\sup_{\eta\in\T}\E_\Q[\eta(\phi)]=\inf_{\Q\in\cQ}\sup_{\tau\in\cT}\E_\Q\phi_\tau.$$
However, the order of ``sup'' and ``inf'' in the duality \eqref{e2} cannot be exchanged. That is, it is possible that
$$\inf_{\Q\in\cQ}\sup_{\tau\in\cT}\E_\Q\phi_\tau>\sup_{\tau\in\cT}\inf_{\Q\in\cQ}\E_\Q\phi_\tau.$$
We refer to \cite[Example 2.1]{BHZ} for such an example. In fact, the right-hand-side above is the sub-hedging price of $\phi$ (even with model uncertainty) if $\phi$ is not divisible and only $S$ and $f$ are used for hedging. We refer to \cite[Theorem 2.1]{BHZ} regarding this point.
\end{remark}

\subsection{Utility maximization} Let $U: (0,\infty)\mapsto\R$ be a utility function, which is strictly increasing, strictly concave, continuously differentiable, and satisfies the Inada condition
$$\lim_{x\rightarrow 0+}U'(x)=\infty\quad\text{and}\quad\lim_{x\rightarrow\infty}U'(x)=0.$$
Consider the utility maximization problem
$$u(x):=\sup_{(H,a,b,c,\mu)\in\mathcal{A}(x)}\E_\P[U(\PPPhi)],\quad x>0,$$
where
$$\mathcal{A}(x):=\{(H,a,b,c,\mu):\ x+\PPPhi>0,\ \P\text{-a.s.}\},\quad x>0.$$
\begin{remark}
\cite{Hobson} also studies the utility maximization problem involving the liquidation of a \textit{given} amount of infinitely divisible American options. Unlike the problem in \cite{Hobson}, here we also incorporate the stocks and European options, and we need to decide how many shares of American options we need to buy at time $t=0$. Another difference is that \cite{Hobson} focuses on the primary problem of the utility maximization, while we shall mainly find the duality of the value function $u$.
\end{remark}

Let us define
$$V(y):=\sup_{x>0}[U(x)-xy],\quad y>0,$$
$$I:=-V'=(U')^{-1},$$
and for $x,y>0$,
\begin{eqnarray}
\notag \cX(x)&:=&\{X\ \text{adapted}:\ X_0=x,\ X_T=x+\PPPhi\geq 0\ \text{for some}\ (H,a,b,c,\mu)\},\\
\notag \cY(y)&:=&\{Y\geq 0\ \text{adapted}:\ Y_0=y,\ ((1+(H\cdot S)_t)Y_t)_{t=0,\dotso,T}\text{ is a $\P$-super-martingle}\\
\notag&&\text{for any}\ H\in\cH\ \text{satisfying }1+H\cdot S\geq 0,\ \E_\P X_TY_T\leq xy\ \text{for any}\ X\in\mathcal{X}(x)\}\\
\label{e5} \cC(x)&:=&\{p\in\bL_+^0:\ p\leq X_T\text{ for some }X\in\cX(x)\},\\
\label{e6} \cD(y)&:=&\{q\in\bL_+^0:\ q\leq Y_T\text{ for some }Y\in\cY(y)\},
\end{eqnarray}
where $\bL_+^0$ is the set of random variables that are nonnegative $\P$-a.s.. Then we have that
$$u(x)=\sup_{p\in\cC(x)}\E_\P[U(p)],\quad x>0.$$
Let us also define
$$v(y):=\inf_{q\in\cD(y)}\E_\P[V(q)],\quad y>0.$$

Below is the main result of utility maximization.
\begin{theorem}[Utility maximization duality]\label{t3}
Let SNA hold. Then we have the following.
\begin{itemize}
\item[i)] $u(x)<\infty$ for any $x>0$, and there exists $y_0>0$ such that $v(y)<\infty$ for any $y>y_0$. Moreover, $u$ and $v$ are conjugate:
$$v(y)=\sup_{x>0}[u(x)-xy],\quad y>0\quad\text{and}\quad u(x)=\inf_{y>0}[v(y)+xy],\quad x>0.$$
Furthermore, $u$ is continuous differentiable on $(0,\infty)$, $v$ is strictly convex on $\{v<\infty\}$, and
$$\lim_{x\rightarrow 0+}u'(x)=\infty\quad\text{and}\quad\lim_{y\rightarrow\infty}v'(y)=0.$$
\item[ii)] If $v(y)<\infty$, then there exists a unique $\hat q(y)\in\cD(y)$ that is optimal for $v(y)$.
\item[iii)] If $U$ has asymptotic elasticity strictly less than 1, i.e.,
$$\text{AE}(U):=\limsup_{x\rightarrow\infty}\frac{xU'(x)}{U(x)}<1,$$
Then we have the following.
\begin{itemize}
\item [a)] $v(y)<\infty$ for any $y>0$, and $v$ is continuously differentiable on $(0,\infty)$. $u'$ and $v'$ are strictly decreasing, and satisfy
$$\lim_{x\rightarrow\infty}u'(x)=0\quad\text{and}\quad\lim_{y\rightarrow 0+}v'(y)=-\infty.$$
Besides, $|\text{AE}(u)|\leq |\text{AE}(U)|<1$.
\item[b)] There exists a unique $\hat p(x)\in\cC(x)$ that is optimal for $u(x)$. If $\hat q(y)\in\cD(y)$ is optimal for $v(y)$, where $y=u'(x)$, then
$$\hat p(x)=I(\hat q(y)),$$
and
$$\E_\P[\hat p(x)\hat q(y)]=xy.$$
\item[c)] We have that
$$u'(x)=\E_\P\left[\frac{\hat p(x)U'(\hat p(x))}{x}\right]\quad\text{and}\quad v'(y)=\E_\P\left[\frac{\hat q(y)V'(\hat q(y))}{y}\right].$$
\end{itemize}
\end{itemize}
\end{theorem}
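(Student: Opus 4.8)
The plan is to recast \thref{t3} as an instance of the abstract utility-maximization duality of Kramkov and Schachermayer on an $\bL^0$-bipolar pair. Writing $\cC:=\cC(1)$ and $\cD:=\cD(1)$, one first records the scaling $\cC(x)=x\cC$ and $\cD(y)=y\cD$, which is immediate from \eqref{e5}--\eqref{e6} since $\cX(x)=x\cX(1)$ (scale the initial wealth and the semi-static strategy together) and likewise for $\cY$. All assertions of \thref{t3} then follow from the Kramkov--Schachermayer theorems once the following are verified: (i) $\cC$ is convex, solid (i.e.\ $0\le p'\le p\in\cC\Rightarrow p'\in\cC$), closed in probability and bounded in $\bL^0$, with $1\in\cC$; (ii) $\cD$ is the polar of $\cC$, written $\cC^\circ:=\{q\in\bL_+^0:\ \E_\P[pq]\le 1\ \forall p\in\cC\}$, and symmetrically $\cC$ equals the polar of $\cD$; and (iii) $u(x)<\infty$ for one (hence every) $x>0$. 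The statements in part iii) under $\text{AE}(U)<1$ are then the corresponding asymptotic-elasticity results of Kramkov--Schachermayer applied verbatim.

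For (i): convexity and solidity of $\cC(x)$ are precisely where infinite divisibility of the American options enters. The attainable set $\{x+\PPPhi\ge 0\}$ is convex because $\cH$ and $\R^L$ are linear, $\R_+^M$ and $\R_+^N$ are convex cones, and---crucially---$\T^N$ is convex, so that $\{c\,\mu(h):\ c\ge 0,\ \mu\in\T^N\}$ is a convex cone; solidity holds by construction via domination from below. Closedness of $\cC$ in probability is the main analytic point and is a no-arbitrage closedness statement proved by the scheme behind \thref{t1}: given $p_n\in\cC$ with $p_n\to p$ in probability and $p_n\le 1+\Phi_{\bar g,\bar h}(H_n,a_n,b_n,c_n,\mu_n)\ge 0$, pass to a subsequence so that $(a_n,b_n,c_n)$ converges and, exploiting that the $\omega$-distributions of liquidating strategies lie in a simplex, so do the $\mu_n$; then use SNA to rule out and strip off the ``$\P$-a.s.\ zero'' directions along which $H_n\cdot S$ or the other components might blow up, exactly as in the discrete-time FTAP argument, obtaining a limiting strategy witnessing $p\in\cC$. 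Boundedness of $\cC$ in $\bL^0$ then follows from NA as in the classical finite-horizon setting, and $1\in\cC$ via the zero strategy.

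For (ii): the inclusion $\cD\subseteq\cC^\circ$ is built into the definition of $\cY(y)$ through the constraint $\E_\P[X_TY_T]\le xy$. The reverse inclusion $\cC^\circ\subseteq\cD$---given $q\in\bL_+^0$ with $\E_\P[pq]\le1$ for all $p\in\cC$, construct adapted $Y\ge0$ with $Y_0=1$, $Y_T\ge q$, the stated supermartingale property along every admissible $1+H\cdot S$, and $\E_\P[X_TY_T]\le xy$ for all $X\in\cX(x)$---is obtained by a discrete-time optional-decomposition / separating-hyperplane argument that uses the closedness of $\cC$ from step (i). Since $\cC$ is convex, solid and closed in probability, the Brannath--Schachermayer bipolar theorem then yields $\cC=\cC^{\circ\circ}$, giving the symmetric polarity. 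For (iii): by SNA and \thref{t1} choose $\Q\in\cQ$; its $\P$-density process scaled by $y$ lies in $\cD(y)$, since $1+H\cdot S$ is a $\Q$-supermartingale when bounded below and, because $\E_\Q f=\bar f$, $\E_\Q g<\bar g$, $\sup_{\tau\in\cT}\E_\Q h_\tau<\bar h$ and $\E_\Q[\mu(h^k)]\le\sup_{\tau\in\cT}\E_\Q h^k_\tau$, one has $\E_\P[X_TY_T]=y\,\E_\Q[x+\PPPhi]\le xy$. Fenchel's inequality $U(p)\le V(q)+pq$ applied with $q=y\,d\Q/d\P$ gives $u(x)\le\E_\P[V(y\,d\Q/d\P)]+xy$, which is finite once $y$ is large enough---yielding both $u(x)<\infty$ and $v(y)<\infty$ for $y>y_0$; the extension to all $y>0$ under $\text{AE}(U)<1$ is the Kramkov--Schachermayer estimate.

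The main obstacle I expect is the closedness of $\cC$ in $\bL^0$ under SNA: one must carry out the full finite-horizon no-arbitrage closedness argument with three instrument types acting at once---dynamic stock trading, static European hedges, and liquidation of the American options through the infinite-dimensional set $\T^N$---and carefully remove the degenerate components of the limiting strategies. This is essentially a re-run of the machinery behind \thref{t1} and \thref{t2}, after which the remainder of \thref{t3} is bookkeeping together with the abstract Kramkov--Schachermayer duality and its asymptotic-elasticity refinements.
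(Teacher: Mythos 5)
Your reduction to the abstract Kramkov--Schachermayer duality is exactly the paper's plan (it cites Theorems~3.1--3.2 in [Sch1] and checks the same list of properties for $\cC:=\cC(1)$ and $\cD:=\cD(1)$), but the crucial step --- closedness of $\cC$ in $\bL^0$ --- is handled by a genuinely different route. You propose a direct FTAP-style limit argument: extract convergent subsequences of $(a_n,b_n,c_n,\mu_n)$, use Baxter--Chacon compactness of $\T^N$, and normalize and invoke SNA to kill blow-up directions. The paper does not re-run that machinery at all: it first shows $\cD^\circ\subseteq\cC$ by a soft argument that routes through \thref{t2}. Observing that the density process of any $\Q\in\cQ$ lies in $\cY(1)$, the polar condition $\E_\P[pq]\le 1$ for all $q\in\cD$ yields $\sup_{\Q\in\cQ}\E_\Q p\le 1$, and then the \emph{attainment} part of \thref{t2}, applied to $\psi=-p$, produces a strategy with $1+\PPPhi\ge p$, i.e.\ $p\in\cC$; closedness of $\cC$ under convergence in measure is then an immediate Fatou corollary since $\cC$ coincides with the polar set $\cD^\circ$. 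In effect the paper transfers the limiting and compactness work to the sub-hedging duality, where it has already been done once. If you do pursue the direct argument, note that ``pass to a subsequence so that $(a_n,b_n,c_n)$ converges'' is not available for the unconstrained coefficient $a_n$, which has no sign or boundedness restriction; the FTAP proof treats the $(H_n,a_n)$ block jointly via the closedness of $\cI$ (citing [nutz2, Theorem~2.2]) rather than by extracting a limit of $a_n$, and your closedness proof would have to do the same. For the inclusion $\cC^\circ\subseteq\cD$ the paper directly applies [Sch1, Proposition~3.1] to the pure-stock subset of $\cC$ and then modifies the terminal value of the resulting supermartingale to equal $q$; this is presumably what your unspecified ``optional-decomposition / separating-hyperplane argument'' would amount to. The rest of your outline --- boundedness of $\cC$ via a density-process estimate, finiteness of $u$ and of $v(y)$ for large $y$, and the asymptotic-elasticity refinements of part iii) --- matches the paper's appeal to [Sch1].
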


\begin{remark}\label{r2}
We cannot replace the liquidating strategies with randomized stopping times since the two types of strategies yield to very different optimization problems:
\begin{eqnarray}
\notag \E_\P U(\eta(\phi))&=&\E_\P\left[U\left(\sum_{t=0}^T\phi_t\eta_t\right)\right],\quad\text{if $\eta$ is a liquidating strategy},\\
\notag \E_{\P\times\lambda} U(\phi_\gamma)&=&\E_\P\left[\sum_{t=0}^T U\left(\phi_t\right)\eta_t\right],\quad\text{if $\eta$ is the $\omega$-distribution of $\gamma\in\T'$}.
\end{eqnarray}
\end{remark}

\section{Proof of Theorems \ref{t1}-\ref{t3}}
\begin{proof}[\textbf{Proof of \thref{t1}}]
``$\Longleftarrow$'': Let $\Q\in\cQ$. Then there exists $\eps_g,\eps_h>0$, such that
$$\E_\Q g<\bar g-\eps_g\quad{and}\quad\sup_{\tau\in\cT}\E_\Q h_\tau<\bar h-\eps_h.$$
Thanks to the one-to-one correspondence between $\T$ and $\T'$, we have that for any $\Q\in\cQ$, 
$$\sup_{\eta\in\T}\E_\Q[\eta(h^i)]=\sup_{\tau\in\cT}\E_\Q h^i_\tau,\quad i=1,\dotso,N,$$
see e.g., \cite[Proposition 1.5]{Edgar}. Then it is easy to see that NA w.r.t. $\bar g-\eps_g,\bar h-\eps_h$ holds, and thus SNA holds.

``$\Longrightarrow$'': We shall proceed in three steps.

\textbf{Step 1}. Define
$$\cI:=\{\PPhi-W:\ \text{for some }(H,a)\text{ and }W\in\bL_+^0\}\cap\bL^\infty,$$
where $\bL^\infty$ is the set of bounded random variables. We shall show that $\cI$ is sequentially closed under weak star topology in this step.

Let $(X^n)_{n=1}^\infty\subset\cI$ such that
$$X^n=\Phi(H^n,a^n)-W^n\stackrel{w^*}\longrightarrow X\in\bL^\infty,$$
where the notation ``$\stackrel{w^*}\longrightarrow$'' represents the convergence under the weak star topology. Then there exist $(Y^m)_{m=1}^\infty$ which are convex combinations of $(X^n)_n$, such that $Y^m\rightarrow X$ a.s. (see e.g., the argument regarding passing from weak star convergence to almost sure convergence below Definition 3.1 on page 35 in \cite{Sch}). Since $\cI$ is convex, $(Y^m)_m\subset\cI$. By \cite[Theorem 2.2]{nutz2}, $\cI$ is closed under $\P$-a.s. convergence. This implies $X\in\cI$.

\textbf{Step 2}. By SNA, there exist $\eps_g,\eps_h>0$, such that NA  holds w.r.t. $\bar g-\eps_g$ and $\bar h-\eps_h$. Then NA also holds w.r.t. $\bar g-\eps_g/2$ and $\bar h-\eps_h/2$. Define
$$\cJ:=\left\{\Phi_{\bar g-\frac{1}{2}\eps_g,\bar h-\frac{1}{2}\eps_h}(H,a,b,c,\mu)-W:\ \text{for some }(H,a,b,c,\mu)\text{ and }W\in\bL_+^0\right\}\cap\bL^\infty.$$
We shall show that $\cJ$ is sequentially closed under weak star topology.

Let $(X^n)_{n=1}^\infty\subset\cJ$ such that
$$X^n=\Phi_{\bar g-\frac{1}{2}\eps_g,\bar h-\frac{1}{2}\eps_h}(H^n,a^n,b^n,c^n,\mu^n)-W^n\stackrel{w^*}\longrightarrow X\in\bL^\infty.$$
We consider the following two cases: 
$$\liminf_{n\rightarrow\infty}||(b^n,c^n)||<\infty\quad\text{and}\quad\liminf_{n\rightarrow\infty}||(b^n,c^n)||=\infty,$$
where $||\cdot||$ represents the sup norm.

\textbf{Case (i) $\liminf_{n\rightarrow\infty}||(b^n,c^n)||<\infty$}. Without loss of generality, assume that $(b^n,c^n)\rightarrow (b,c)\in\R^M\times\R^N$. As $\mathcal{F}$ is separable, $\mathbb{L}^1$ is also separable. Then by the sequential Banach-Alaoglu theorem, there exists $\mu=(\mu_0,\dotso,\mu_T)\in\T^N$, such that up to a subsequence $\mu^n_t\stackrel{w^*}\longrightarrow\mu_t$ for $t=0,\dotso,T$. Since $h$ is bounded,
$$\mu^n(h)\stackrel{w^*}\longrightarrow\mu(h).$$
Then we have that
$$b^n\left(g-\left(\bar g-\frac{1}{2}\eps_g\right)\right)+c^n\left(\mu^n(h)-\left(\bar h-\frac{1}{2}\eps_h\right)\right)\stackrel{w^*}\longrightarrow b\left(g-\left(\bar g-\frac{1}{2}\eps_g\right)\right)+c\left(\mu(h)-\left(\bar h-\frac{1}{2}\eps_h\right)\right).$$
Hence,
$$\Phi(H^n,a^n)-W^n\stackrel{w^*}\longrightarrow X-b\left(g-\left(\bar g-\frac{1}{2}\eps_g\right)\right)-c\left(\mu(h)-\left(\bar h-\frac{1}{2}\eps_h\right)\right)\in\bL^\infty.$$
Then by Step 1, there exists $(H,a)$ and $W\in\bL_+^0$ such that
$$\Phi(H,a)-W=X-b\left(g-\left(\bar g-\frac{1}{2}\eps_g\right)\right)-c\left(\mu(h)-\left(\bar h-\frac{1}{2}\eps_h\right)\right).$$
Therefore,
$$X=\Phi_{\bar g-\frac{1}{2}\eps_g,\bar h-\frac{1}{2}\eps_h}(H,a,b,c,\mu)-W\in\cJ.$$

\textbf{Case (ii) $\liminf_{n\rightarrow\infty}||(b^n,c^n)||=\infty$}. Without loss of generality, assume that $d^n:=||(b^n,c^n)||>0$ for any $n$. We have that
$$\frac{X^n}{d^n}=\Phi_{\bar g-\frac{1}{2}\eps_g,\bar h-\frac{1}{2}\eps_h}\left(\frac{H^n}{d^n},\frac{a^n}{d^n},\frac{b^n}{d^n},\frac{c^n}{d^n},\mu^n\right)-\frac{W^n}{d^n}\stackrel{w^*}\longrightarrow 0.$$
Then by Case (i), there exist $(H',a',b',c',\mu')$ and $W'\in\bL_+^0$, such that
$$\Phi_{\bar g-\frac{1}{2}\eps_g,\bar h-\frac{1}{2}\eps_h}(H',a',b',c',\mu')-W'=0.$$
Moreover, $b',c'\geq 0$ and at least one component of $(b',c')$ equals 1. Hence
$$\Phi_{\bar g-\eps_g,\bar h-\eps_h}(H',a',b',c',\mu')>0,\quad\P\text{-a.s.},$$
which contradicts NA w.r.t. $\bar g-\eps_g$ and $\bar h-\eps_h$.

\textbf{Step 3}. Since $\cJ$ is convex and sequentially closed under the weak star topology, it is weak star closed by \cite[Corollary 5.12.7]{Conway}. Moreover, because NA holds w.r.t. $\bar g-\eps_g/2$ and $\bar h-\eps_h/2$, $\cJ\cap L_+^\infty=\{0\}$. Then by the abstract version of Theorem 1.1 in \cite{Sch} as formulated below Remark 3.1 in the same paper, there exists $q\in\mathbb{L}^1$ such that $q$ is a.s. strictly positive, and $\E_\P[qX]\leq 0$ for any $X\in\cJ$. Now define the measure $\Q$ by the Radon-Nikodym derivative $d\Q/d\P:=q/\E_\P[q]$. Then it can be seen that $\Q$ is an EMM satisfying
$$\E_\Q f=\bar f,\quad\E_\Q g\leq\bar g-\eps_g,\quad\text{and}\quad\sup_{\tau\in\mathcal{T}}\E_\Q h_\tau\leq\bar h-\eps_h.$$
In particular, $\cQ\neq\emptyset$.
\end{proof}

\begin{proof}[\textbf{Proof of \thref{t2}}]
We shall only prove the results for $\phi$. The case for $\psi$ is similar, and in fact simpler. Let us first prove \eqref{e2}. It can be shown that
$$\pi_{am}(\phi)\leq\sup_{\eta\in\T}\inf_{\Q\in\cQ}\E_\Q[\eta(\phi)]\leq\inf_{\Q\in\cQ}\sup_{\eta\in\T}\E_\Q[\eta(\phi)]=\inf_{\Q\in\cQ}\sup_{\tau\in\cT}\E_\Q\phi_\tau.$$
If $\pi_{am}(\phi)<\inf_{\Q\in\cQ}\sup_{\tau\in\cT}\E_\Q\phi_\tau$, then take $\bar\phi\in\R$ such that
\begin{equation}\label{e3}
\pi_{am}(\phi)<\bar\phi<\inf_{\Q\in\cQ}\sup_{\tau\in\cT}\E_\Q\phi_\tau.
\end{equation}
Now we add $\phi$ into the market, and we assume that $\phi$ can only be bought at time $t=0$ with price $\bar\phi$. Then since $\bar\phi>\pi_{am}(\phi)$, SNA also holds when $\phi$ is involved (i.e., when the market consists of $S$ traded dynamically, and $f,g,h,\phi$ traded statically). As a consequence, there exists $\Q\in\cQ$ such that $\sup_{\tau\in\cT}\E_\Q\phi_\tau<\bar\phi$ by \thref{t1}, which contradicts \eqref{e3}. Therefore, we have that \eqref{e2} holds. Similarly we can show that \eqref{e1} holds.

Next, let us prove the existence of an optimal sub-hedging strategy for $\phi$. It can be shown that
\begin{eqnarray}
\notag \pi_{am}(\phi)&=&\sup_{b\in\R_+^M,c\in\R_+^N}\sup_{\mu\in\T^N,\eta\in\T}\sup\{x:\ \exists (H,a),\ \text{s.t.}\ \PPPhi+\eta(\phi)\geq x\}\\
\notag &=&\sup_{b\in\R_+^M,c\in\R_+^N}\sup_{\mu\in\T^N,\eta\in\T}\inf_{\Q\in\cQ_f}\E_\Q[b(g-\bar g)+c(\mu(h)-\bar h)+\eta(\phi)],
\end{eqnarray}
where
$$\cQ_f:=\{\Q\ \text{is an EMM}:\ \E_\Q f=\bar f\},$$
and we apply the superhedging Theorem on page 828 in \cite{nutz2} for the second line. We shall proceed in three steps to show the existence of $(H^{**},a^{**},b^{**},c^{**},\mu^{**})$ and $\eta^{**}$ for \eqref{e4}.

\textbf{Step 1}. Consider the map $F: \R_+^M\times\R_+^N\mapsto\R$,
$$F(b,c):=\sup_{\mu\in\T^N,\eta\in\T}\inf_{\Q\in\cQ_f}\E_\Q[b(g-\bar g)+c(\mu(h)-\bar h)+\eta(\phi)].$$
For $(b,c),(b',c')\in\R_+^M\times\R_+^N$,
\begin{eqnarray}
\notag&&|F(b,c)-F(b',c')|\\
\notag &&\leq\sup_{\mu\in\T^N,\eta\in\T}\left|\inf_{\Q\in\cQ_f}\E_\Q[b(g-\bar g)+c(\mu(h)-\bar h)+\eta(\phi)]-\inf_{\Q\in\cQ_f}\E_\Q[b'(g-\bar g)+c'(\mu(h)-\bar h)+\eta(\phi)]\right|\\
\notag &&\leq\sup_{\mu\in\T^N,\eta\in\T}\sup_{\Q\in\cQ_f}\left|\E_\Q[b(g-\bar g)+c(\mu(h)-\bar h)+\eta(\phi)]-\E_\Q[b'(g-\bar g)+c'(\mu(h)-\bar h)+\eta(\phi)]\right|\\
\notag &&\leq\sup_{\mu\in\T^N,\eta\in\T}\sup_{\Q\in\cQ_f}\E_\Q[|b-b'||g-\bar g|+|c-c'||\mu(h)-\bar h|]\\
\notag &&\leq K(M+N)||(b,c)-(b',c')||,
\end{eqnarray}
where $|b-b'|:=(|b^1-{b'}^1|,\dotso,|b^M-{b'}^M|)$ and similar for the other related terms, and $K>0$ is a constant such that 
$$||g(\cdot)-\bar g||,||h_t(\cdot)-\bar h||,||\phi_t(\cdot)||\leq K,\quad\forall(t,\omega)\in\{0,\dotso,T\}\times\Omega.$$
Hence $F$ is continuous.

\textbf{Step 2}. Now take $\Q\in\cQ\subset\cQ_f$. Let
$$\eps:=\min_{1\leq i\leq M}\left\{\bar g^i-\E_\Q g^i\right\}\wedge\min_{1\leq i\leq N}\left\{\bar h^i-\sup_{\tau\in\cT}\E_\Q h^i_\tau\right\}>0.$$
Then
$$\sup_{b\in\R_+^M,c\in\R_+^N}F(b,c)\geq F(0,0)\geq-K>-2K\geq\sup_{||(b,c)||>\frac{3K}{\eps}}F(b,c).$$
As a consequence,
$$\sup_{b\in\R_+^M,c\in\R_+^N}F(b,c)=\sup_{||(b,c)||\leq\frac{3K}{\eps}}F(b,c).$$
By the continuity of $F$ from Step 1, there exists $(b^{**},c^{**})\in\R_+^M\times\R_+^N$, such that
$$\pi_{am}(\phi)=\sup_{b\in\R_+^M,c\in\R_+^N}F(b,c)=F(b^{**},c^{**})=\sup_{\mu\in\T^N,\eta\in\T}\inf_{\Q\in\cQ_f}\E_\Q[b^{**}(g-\bar g)+c^{**}(\mu(h)-\bar h)+\eta(\phi)].$$

\textbf{Step 3}. For any $\Q\in\cQ_f$, the map
$$(\mu,\eta)\mapsto\E_\Q[b^{**}(g-\bar g)+c^{**}(\mu(h)-\bar h)+\eta(\phi)]=\E_\P\left[\frac{d\Q}{d\P}\left(b^{**}(g-\bar g)+c^{**}(\mu(h)-\bar h)+\eta(\phi)\right)\right]$$
is continuous under the Baxter-Chacon topology\footnote{The sequence $\{\eta^n=(\eta_0^n,\dotso,\eta_T^n):\ n\in\mathbb{N}\}\subset\T$ is said to converge to $\eta\in\T$ in the Baxter-Chacon topology, if for any $Y\in\mathbb{L}^1$,
$$\lim_{n\rightarrow\infty}\E_\P\left[Y\eta_t^n\right]=\E_\P\left[Y\eta_t^n\right],\quad t=0,\dotso,T.$$
That is, the Baxter-Chacon topology is induced by the weak-star topology. We refer this to e.g., \cite{Edgar}.}. Then the map
$$(\mu,\eta)\mapsto\inf_{\Q\in\cQ_f}\E_\Q[b^{**}(g-\bar g)+c^{**}(\mu(h)-\bar h)+\eta(\phi)]$$
is upper semi-continuous under the Baxter-Chacon topology. By \cite[Theorem 1.1]{Edgar}, the set $\T^N\times\T$ is compact under the Baxter-Chacon topology. Hence there exists $(\mu^{**},\eta^{**})\in\T^N\times\T$, such that
\begin{eqnarray}
\notag \pi_{am}(\phi)&=&\sup_{\mu\in\T^N,\eta\in\T}\inf_{\Q\in\cQ_f}\E_\Q[b^{**}(g-\bar g)+c^{**}(\mu(h)-\bar h)+\eta(\phi)]\\
\notag &=&\inf_{\Q\in\cQ_f}\E_\Q[b^{**}(g-\bar g)+c^{**}(\mu^{**}(h)-\bar h)+\eta^{**}(\phi)]\\
\notag &=&\sup\{x:\ \exists(H,a),\ \text{s.t.}\ \Phi_{\bar g,\bar h}(H,a,b^{**},c^{**},\mu^{**})+\eta^{**}(\phi)\geq x\},
\end{eqnarray}
where we apply the Superhedging Theorem in \cite{nutz2} for the third line. By the same theorem in \cite{nutz2}, there exists $(H^{**},a^{**})$ such that
$$\Phi_{\bar g,\bar h}(H^{**},a^{**},b^{**},c^{**},\mu^{**})+\eta^{**}(\phi)\geq\pi_{am}(\phi).$$
\end{proof}

\begin{proof}[\textbf{Proof of \thref{t3}}]
Recall $\cC(x)$ defined in \eqref{e5} and $\cD(x)$ defined in \eqref{e6}, and denote $\cC:=\cC(1)$ and $\cD:=\cD(1)$. By Theorems 3.1 and 3.2 in \cite{Sch1}, it suffices to show that $\cC$ and $\cD$ have the following properties:
\begin{itemize}
\item[1)] $\cC(1)$ and $cD(1)$ are convex, solid, and closed in the topology of convergence in measure.
\item[2)] For $p\in\bL_+^0$,
$$p\in\cC\EQUIV\E_\P[pq]\leq 1\quad\text{for}\ \forall q\in\cD.$$
For $q\in\bL_+^0$,
$$q\in\cD\EQUIV\E_\P[pq]\leq 1\quad\text{for}\ \forall p\in\cC.$$
\item[3)] $\cC$ is bounded in probability and contains the identity function $\mathds{1}$.
\end{itemize}
It is easy to see that $\cC$ and $\cD$ are convex and solid, $\E_\P[pq]\leq 1$ for any $p\in\cC$ and $q\in\cD$, and $\cC$ contains the function $\mathds{1}$. We shall prove the rest of the properties in three parts.

\textbf{Part 1}. We shall show $\cC$ is bounded in probability. Take $\Q\in\cQ$. Then $d\Q/d\P\in\cD$, and
$$\sup_{p\in\cC}\E_\P\left[\frac{d\Q}{d\P}p\right]=\sup_{p\in\cC}\E_\Q p\leq 1.$$
Therefore, we have that
\begin{eqnarray}
\notag \sup_{p\in\cC}\P(p>C)&=&\sup_{p\in\cC}\P\left(\frac{d\Q}{d\P}p>\frac{d\Q}{d\P}C\right)\\
\notag &=&\sup_{p\in\cC}\left[\P\left(\frac{d\Q}{d\P}p>\frac{d\Q}{d\P}C,\ \frac{d\Q}{d\P}\leq\frac{1}{\sqrt{C}}\right)+\P\left(\frac{d\Q}{d\P}p>\frac{d\Q}{d\P}C,\ \frac{d\Q}{d\P}>\frac{1}{\sqrt{C}}\right)\right]\\
\notag &\leq&\P\left(\frac{d\Q}{d\P}\leq\frac{1}{\sqrt{C}}\right)+\sup_{p\in\cC}\P\left(\frac{d\Q}{d\P}p>\sqrt{C}\right)\\
\notag &\leq&\P\left(\frac{d\Q}{d\P}\leq\frac{1}{\sqrt{C}}\right)+\frac{1}{\sqrt{C}}\\
\notag &\rightarrow&0,\quad C\rightarrow\infty.
\end{eqnarray}

\textbf{Part 2}. We shall show that for $p\in\bL_+^0$, if $\E_\P[pq]\leq 1$ for any $q\in\cD$, then $p\in\cC$, and as a consequence, $\cC$ is closed under the topology of convergence in measure. Take $p\in\bL_+^0$ satisfying $\E_\P[pq]\leq 1$ for any $q\in\cD$. It is easy to see that for any $\Q\in\cQ$, the process $(\frac{d\Q}{d\P}|_{\cF_t})_{t=0,\dotso,T}$ is in $\cY(1)$. Therefore,
$$\sup_{\Q\in\cQ}\E_\Q p=\sup_{\Q\in\cQ}\E_\P\left[\frac{d\Q}{d\P}p\right]\leq 1.$$
Thanks to \thref{t2}, there exists $(H,a,b,c,\mu)$ such that
$$1+\PPPhi\geq p,$$
which implies that $p\in\cC$.

Now let $(p^n)_{n=1}^\infty\subset\cC$ such that $p^n\stackrel{\P}\longrightarrow p$. Then without loss of generality, we assume that $p^n\rightarrow p$ a.s.. For any $q\in\cD$, we have that
$$\E_\P[pq]\leq\liminf_{n\rightarrow\infty}\E_\P[p^nq]\leq 1.$$
This implies $p\in\cC$.

\textbf{Part 3}. We shall show that for $q\in\bL_+^0$, if $\E_\P[pq]\leq 1$ for any $p\in\cC$, then $q\in\cD$, and as a consequence, $\cD$ is closed under the topology of convergence in measure. Take $q\in\bL_+^0$ satisfying $\E_\P[pq]\leq 1$ for any $p\in\cC$. Since
$$\cC\supset\{p'\in\bL_+^0:\ p'\leq 1+H\cdot S,\text{ for some }H\in\cH\},$$
by \cite[Proposition 3.1]{Sch1} there exists a nonnegative adapted process $Y'=(Y'_t)_{t=0,\dotso,T}$, such that $q\leq Y'_T$, and for any $H\in\cH$ with $1+H\cdot S\geq 0$, $((1+(H\cdot S)_t)Y'_t)_{t=0,\dotso,T}$ is a $\P$-super-martingale. Now define
\begin{displaymath}
   Y_t = \left\{
     \begin{array}{ll}
       Y'_t, & t=0,\dotso,T-1,\\
       q, & t=T.
     \end{array}
   \right.
\end{displaymath}
Then it can be shown that $Y=(Y_t)_{t=0,\dotso,T}\in\cY(1)$. Since $q=Y_T$, $q\in\cD$. Similar to the argument in Part 2, we can show that $\cD$ is closed under the topology of convergence in measure.
\end{proof}

\section{Arbitrage and hedging under model uncertainty}

In this section, we extend the FTAP and sub-hedging results to the case of non-dominated model uncertainty. The main difficulty for the proof lies in the lack of a dominating measure.  The main idea for the proof is to discretize the path space and also to apply the minimax theorem. Theorem \ref{t4} and \ref{t5} are the main results of the model uncertainty case.

The notation in this and the next sections will be independent of the previous sections, yet we will borrow some concepts from Section \ref{s3} when there is no confusion. We follow the set-up in \cite{nutz2}. Let $\Omega$ be a complete separable metric space and $T\in\mathbb{N}$ be the time horizon. Let $\Omega_t:=\Omega^t$ be the $t$-fold Cartesian product for $t=1,\dotso,T$ (with convention $\Omega_0$ is a singleton). We denote by $\mathcal{F}_t$ the universal completion of $\mathcal{B}(\Omega_t)$. For each $t\in\{0,\dotso,T-1\}$ and $\omega\in\Omega_t$, we are given a nonempty convex set $\mathcal{P}_t(\omega)\subset\mathfrak{P}(\Omega_1)$ of probability measures.  Here $\mathcal{P}_t$ represents the possible models for the $t$-th period, given state $\omega$ at time $t$. We assume that for each $t$, the graph of $\mathcal{P}_t$ is analytic, which ensures that $\mathcal{P}_t$ admits a universally measurable selector, i.e., a universally measurable kernel $P_t:\ \Omega_t\rightarrow \mathfrak{P}(\Omega_t)$ such that $P_t(\omega)\in\mathcal{P}_t(\omega)$ for all $\omega\in\Omega_t$. Let
\begin{equation}\label{prob}
\mathcal{P}:=\{P_0\otimes\dotso\otimes P_{T-1}:\ P_t(\cdot)\in\mathcal{P}_t(\cdot),\ t=0,\dotso,T-1\},
\end{equation}
where each $P_t$ is a universally measurable selector of $\mathcal{P}_t$, and
$$P_0\otimes\dotso\otimes P_{T-1}(A)=\int_{\Omega_1}\dotso\int_{\Omega_1} 1_A(\omega_1,\dotso,\omega_T)P_{T-1}(\omega_1,\dotso,\omega_{T-1};d\omega_T)\dotso P_0(d\omega_1),\ \ \ A\in\Omega_T.$$

The concepts $S,f,g,h,\bar f,\bar g,\bar h,\cT,\T,\cH,\PPPhi,\PPhi$ (and related notation) are definied similarly as those in Section \ref{s3}, except that here we require $S,h$ and $\eta\in\T$ to be $(\mathcal{B}(\Omega_t))_t$-adapted, $f$ and $g$ to be $\mathcal{B}(\Omega_T)$-measurable, $\tau\in\cT$ to be an $(\mathcal{B}(\Omega_t))_t$-stopping time, $H\in\cH$ to be $(\mathcal{F}_t)_t$-adapted, and the summation in \eqref{e75} holds for every $\omega\in\Omega_T$. We assume that $g$ is bounded from below.

\begin{remark}
In order to apply the results in \cite{ZZ3} (the results in \cite{ZZ3} is based on those in \cite{nutz2}), we require that $H\in\mathcal{H}$ to be $(\mathcal{F}_t)_t$-adapted, while $S,h$ and $\eta\in\T$ to be $(\mathcal{B}(\Omega_t))_t$-adapted, and $f$ and $g$ to be $\mathcal{B}(\Omega_T)$-measurable. In \cite{nutz2}, such different kinds of measurability is chosen for $\cH,S$ and $f$ because of the measurable selection argument.
\end{remark}

Recall the definition of (strict) no arbitrage in the quasi-surely sense (see e.g., \cite{ZZ3,nutz2}).

\begin{definition}
We say NA$(\cP)$ holds w.r.t. $\bar g$ and $\bar h$, if for any $(H,a,b,c,\mu)$,
$$\text{if}\quad\PPPhi\geq 0\ \cP\text{-q.s.}\footnote{We say a property holds $\cP$-q.s., if the property holds $P$-a.s. for any $P\in\cP$.},\quad\text{then}\quad\PPPhi=0\ \cP\text{-q.s.}.$$
We say SNA$(\cP)$ (w.r.t. $\bar g$ and $\bar h$), if there exists $\eps_g,\eps_h>0$, such that NA$(\cP)$ holds w.r.t. $\bar g-\eps_g$ and $\bar h-\eps_h$.
\end{definition}

Given $\mathcal{B}(\Omega_T)$-measurable European option $\psi$ and $\mathcal{B}(\Omega_t))_t$-adapted American option $\phi$, let us define the sub-hedging prices, 
$$\pi_{eu}(\psi):=\sup\{x:\ \exists(H,a,b,c,\mu),\ \text{s.t.}\ \PPPhi+\psi\geq x,\ \cP\text{-q.s.}\},$$
and
$$\pi_{am}(\phi):=\sup\{x:\ \exists(H,a,b,c,\mu)\ \text{and}\ \eta\in\T,\ \text{s.t.}\ \PPPhi+\eta(\phi)\geq x,\ \cP\text{-q.s.}\}.$$

For $\tilde g\in\R^M$ and $\tilde h\in\R^N$, define the set of martingale measures (MMs)
\begin{equation}\label{e92}
\cQ_{\tilde g,\tilde h}:=\{Q\lll\cP\footnote{We say $Q\lll\cP$, if $\exists P\in\cP$, such that $Q\ll P$.}:\ Q\text{ is an MM, }\E_Q f=\bar f,\ \E_Q g\leq\tilde g,\ \sup_{\tau\in\mathcal{T}}\E_Q h_\tau\leq\tilde h\}.
\end{equation}

We will make the following standing assumption.
\begin{assumption}\label{a1}
\
\begin{itemize}
\item [(1)] The set
$$\cQ_{\bar g}:=\{Q\lll\cP:\ Q\text{ is an MM, }\E_Q f=\bar f,\ \E_Q g\leq\bar g\}$$
is weakly compact.

\item[(2)] $g$ is bounded from below, and $\psi$ is bounded and continuous.

\item[(3)] For $k=1,\dotso,N$ and $t=1,\dotso,T$, $h_t^k$ and $\phi_t$ are bounded and uniformly continuous. 
\end{itemize}
\end{assumption}

\begin{remark}
The weak compactness of $\cQ_{\bar g}$ is used twice in the proof of the sub-hedging duality (see \eqref{e91} and \eqref{e901} in the next section). In particular, in \eqref{e91} to show the exchangeability of the sup over $\cQ_{\bar g}$ and the inf over $\T^{N+1}$, we use a discretization argument, in which the weak compactness guarantees the desired limiting property (see Step 3 in the proof of \leref{l1}). It can also be found in e.g. \cite{DS3,DS1}, that the weak compactness of related probability measure set plays a crucial role in the limiting argument. In \eqref{e901}, we directly apply the minimax theorem which requires the weak compactness of $\cQ_{\bar g}$.
\end{remark}

\begin{remark}
What we have in mind is a tree/lattice model in which the location of the nodes are unknown but are believed to reside in certain intervals. (This is a discrete-time version of the volatility uncertainty.) It is natural in fact to assume in these tree models that the stock price is bounded, and the set of martingale measures with ``bounded volatility'' is weakly compact. In general, $\cQ_{\bar g}$ is tight by the martingale property (when $S$ is the canonical process). Therefore, our assumption is actually just weak closedness.
\end{remark}

\begin{example}
If $\cP$ is the set of probability measures on a compact set $\tilde\Omega\subset\R^T$, $S$ is the canonical process on $\tilde\Omega$, $f$ is continuous, and $g$ is lower semi-continuous, then $\cQ_{\bar g}$ is weakly compact. In this case, we are in the flavor of the model independent setup, see e.g., \cite{Sch3}. That is, every scenario is possible within the path space $\tilde\Omega$. Note that the notion of arbitrage in this case is different from that in \cite{Sch3}.
\end{example}

\begin{example}
Similar to \cite{Sch3}, the existence of a European option with a super-linear grow condition (w.r.t. stock) may compactify $\mathcal{Q}_{\bar g}$. To be more specific, let $\Omega$ be a Banach space, and $\mathcal{P}$ be the set of probability measures on some closed subset of $\Omega_T$. Let $\kg: \Omega\mapsto\R$ be lower semi-continuous satisfying
\begin{equation}\label{ee1}
\liminf_{||\alpha||\rightarrow\infty}\frac{\kg(\alpha)}{||\alpha||}\in(0,\infty].
\end{equation}
We assume that one of the European options $g$ takes the form $\kg$, i.e., $g^1(\omega)=\kg(\omega_{t'})$ for some $t'\in\{1,\dotso,T\}$. Assume that for $l=1,\dotso,d$, $t=1,\dotso,T$, $i=1,\dotso,L$, and $j=2,\dotso,M$, $S_t^l$ and $f^i$ are continuous, and $g^j$ is lower semi-continuous, and
\begin{equation}\label{ee2}
\lim_{||\omega||\rightarrow\infty}\frac{S_t^l(\omega)}{\gamma(\omega)}=\lim_{||\omega||\rightarrow\infty}\frac{f^i(\omega)}{\gamma(\omega)}=\lim_{||\omega||\rightarrow\infty}\frac{g^j(\omega)^-}{\gamma(\omega)}=0,
\end{equation}
where  $\gamma(\omega):=\sum_{t=1}^T\kg(\omega_t)$, and ${g^j}^-$ is the negative part of $g^j$. Using an argument similar to the proof of Theorem 1.3 on page 9 in \cite{Sch3}, we can show that $\mathcal{Q}_{\bar g}$ is weakly compact. Indeed, \eqref{ee1} implies that $\mathcal{Q}_{\bar g}$ is tight and thus precompact, and \eqref{ee2} is some boundedness condition for $S,f,g$ when $||\omega||\rightarrow\infty$, which enables us to apply the properties of weak convergence to show the closedness of $\cQ_{\bar g}$.
\end{example}

Below are the main results for sub-hedging and FTAP under model uncertainty.
\begin{theorem}[Sub-hedging]\label{t4}
Assume SNA$(\cP)$ holds when only $S,f$ and $g$ are involved. Then under \asref{a1} we have that
\begin{equation}\notag
\pi_{eu}(\psi)=\inf_{Q\in\cQ_{\bar g,\bar h}}\E_Q \psi,
\end{equation}
and
\begin{equation}\label{e71}
\pi_{am}(\phi)=\inf_{Q\in\cQ_{\bar g,\bar h}}\sup_{\tau\in\cT}\E_Q \phi_\tau,
\end{equation}
where if $\cQ_{\bar g,\bar h}=\emptyset$, then $\pi_{eu}(\psi)=\pi_{am}(\phi)=\infty$.
Moreover, if $\cQ_{\bar g,\bar h}\neq\emptyset$, then there exist $Q_{eu},Q_{am}\in\cQ_{\bar g,\bar h}$ such that
$$\pi_{eu}(\psi)=\E_{Q_{eu}} \psi\quad\text{and}\quad\pi_{am}(\phi)=\sup_{\tau\in\cT}\E_{Q_{am}} \phi_\tau.$$
\end{theorem}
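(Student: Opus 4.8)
The plan is to prove \eqref{e71}; the identity for $\psi$ is entirely analogous and simpler because $Q\mapsto\E_Q\psi$ is weakly continuous by \asref{a1}(2). I would establish the two inequalities separately. For ``$\pi_{am}(\phi)\le\inf_{Q\in\cQ_{\bar g,\bar h}}\sup_{\tau\in\cT}\E_Q\phi_\tau$'' (with $\inf\emptyset=\infty$, so this is vacuous when $\cQ_{\bar g,\bar h}=\emptyset$): given any $(H,a,b,c,\mu)$ and $\eta\in\T$ with $\PPPhi+\eta(\phi)\ge x$ $\cP$-q.s.\ and any $Q\in\cQ_{\bar g,\bar h}$, the inequality holds $Q$-a.s.\ since $Q\lll\cP$; integrating and using that $\eta(\phi)$ is bounded so $H\cdot S$ is bounded below and hence $\E_Q[H\cdot S]\le 0$ ($S$ being a $Q$-martingale), together with $\E_Q[a(f-\bar f)]=0$, $\E_Q[b(g-\bar g)]\le 0$ (as $b\ge 0$, $\E_Q g\le\bar g$), and $\E_Q[c(\mu(h)-\bar h)]\le c\big(\sup_{\tau\in\cT}\E_Q h_\tau-\bar h\big)\le 0$, we get $x\le\E_Q[\eta(\phi)]\le\sup_{\tau\in\cT}\E_Q\phi_\tau$.

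For the reverse inequality I would first, exactly as in the proof of \thref{t2}, fix the number $c\in\R_+^N$ of units of $h$ bought, the liquidating strategies $\mu\in\T^N$ applied to them and the liquidating strategy $\eta\in\T$ applied to $\phi$; the residual payoff $c(\mu(h)-\bar h)+\eta(\phi)$ is then a bounded-below $(\mathcal{B}(\Omega_t))_t$-adapted claim to be sub-hedged with $S$ alone (dynamically) together with $f$ and $g$ (statically, $g$ bought only). Since SNA$(\cP)$ holds for the market $S,f,g$, the quasi-sure semi-static super-hedging duality of \cite{nutz2,ZZ3} yields
$$\pi_{am}(\phi)=\sup_{c\in\R_+^N}\ \sup_{\mu\in\T^N,\ \eta\in\T}\ \inf_{Q\in\cQ_{\bar g}}\E_Q\big[c(\mu(h)-\bar h)+\eta(\phi)\big].$$
On the other hand, computing the inner supremum for a fixed $Q\in\cQ_{\bar g}$ — using $\sup_{\mu^k\in\T}\E_Q\mu^k(h^k)=\sup_{\tau\in\cT}\E_Q h^k_\tau$ and $\sup_{\eta\in\T}\E_Q\eta(\phi)=\sup_{\tau\in\cT}\E_Q\phi_\tau$ — gives $\sup_{c,\mu,\eta}\E_Q[\,\cdot\,]=\sup_{\tau\in\cT}\E_Q\phi_\tau$ when $\sup_{\tau\in\cT}\E_Q h_\tau\le\bar h$ and $+\infty$ otherwise, so that
$$\inf_{Q\in\cQ_{\bar g}}\ \sup_{c\in\R_+^N}\ \sup_{\mu\in\T^N,\ \eta\in\T}\ \E_Q\big[c(\mu(h)-\bar h)+\eta(\phi)\big]=\inf_{Q\in\cQ_{\bar g,\bar h}}\ \sup_{\tau\in\cT}\E_Q\phi_\tau.$$
Thus \eqref{e71} becomes exactly the assertion that the order of $\sup_{c,\mu,\eta}$ and $\inf_{Q\in\cQ_{\bar g}}$ may be interchanged, both sides being $+\infty$ when $\cQ_{\bar g,\bar h}=\emptyset$.

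To carry out this exchange I would proceed in two stages. For fixed $c$, the map $((\mu,\eta),Q)\mapsto\E_Q[c(\mu(h)-\bar h)+\eta(\phi)]$ is bilinear on the convex set $\T^N\times\T$ (compact in the Baxter--Chacon topology, cf.\ \cite{Edgar}) and on the convex weakly compact set $\cQ_{\bar g}$ (\asref{a1}(1)), which is the setting for the minimax theorem; the outer $\sup_{c\ge 0}$ is then absorbed by a truncation $\|c\|\le R$ argument, the case $\cQ_{\bar g,\bar h}=\emptyset$ being trivial since both sides equal $+\infty$. The genuine difficulty — and the step I expect to be the main obstacle — is that under $\cP$ there is no dominating measure, so neither the Baxter--Chacon topology on $\T$ nor the weak continuity of $Q\mapsto\E_Q[\eta(\phi)]$ (against a payoff that becomes discontinuous once the path space is coarsened) is directly available. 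The remedy is to discretize the path space into finitely many cells — exploiting the uniform continuity of $h^k_t,\phi_t$ in \asref{a1}(3) to keep the approximation error small — apply the minimax theorem on each resulting finitely generated model, and then pass to the limit, where precisely the weak compactness of $\cQ_{\bar g}$ furnishes a limit point of the near-optimal measures and the convergence of the values (this is the role of Step 3 in the proof of \leref{l1}).

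Finally, for attainment of $Q_{am}$ and $Q_{eu}$: since each $h^k_t$ and $\phi_t$ is bounded and continuous, $Q\mapsto\E_Q h^k_\tau$ and $Q\mapsto\E_Q\phi_\tau$ are weakly continuous, hence $Q\mapsto\sup_{\tau\in\cT}\E_Q h^k_\tau$ and $Q\mapsto\sup_{\tau\in\cT}\E_Q\phi_\tau$ are weakly lower semi-continuous; therefore $\cQ_{\bar g,\bar h}=\{Q\in\cQ_{\bar g}:\sup_{\tau\in\cT}\E_Q h_\tau\le\bar h\}$ is a weakly closed, hence weakly compact, subset of $\cQ_{\bar g}$, and when it is nonempty the lower semi-continuous functional $Q\mapsto\sup_{\tau\in\cT}\E_Q\phi_\tau$ attains its infimum over it at some $Q_{am}$; similarly the weakly continuous $Q\mapsto\E_Q\psi$ attains its infimum at some $Q_{eu}$.
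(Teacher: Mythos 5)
Your proposal mirrors the paper's proof of \thref{t4}: decompose $\pi_{am}(\phi)$ by fixing $(c,\mu,\eta)$, invoke the quasi-sure semi-static sub-hedging duality for $(S,f,g)$ from \cite{nutz2,ZZ3} to get $\pi_{am}(\phi)=\sup_{c\ge 0}\sup_{\mu,\eta}\inf_{Q\in\cQ_{\bar g}}\E_Q[c(\mu(h)-\bar h)+\eta(\phi)]$, exchange $\sup_{\mu,\eta}$ with $\inf_{Q\in\cQ_{\bar g}}$ via the discretize-then-minimax-then-pass-to-the-limit argument (which is exactly \leref{l1}, and yes its engine is the weak compactness of $\cQ_{\bar g}$ together with the uniform continuity in \asref{a1}(3)), exchange $\sup_{c\ge 0}$ with $\inf_{Q}$ by one more minimax step, and extract the optimizers $Q_{eu},Q_{am}$ from weak compactness of $\cQ_{\bar g,\bar h}$ and weak lower semi-continuity of $Q\mapsto\sup_\tau\E_Q\phi_\tau$ (resp.\ weak continuity of $Q\mapsto\E_Q\psi$). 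One caveat: dismissing the case $\cQ_{\bar g,\bar h}=\emptyset$ as ``trivial since both sides equal $+\infty$'' is not quite right — that $\pi_{am}(\phi)$ (the $\sup\inf$ side) equals $+\infty$ when $\cQ_{\bar g,\bar h}=\emptyset$ is a \emph{conclusion} of the minimax interchange, not something available beforehand (showing it directly would itself require a separating-hyperplane argument exploiting convexity of $Q\mapsto\sup_\tau\E_Q h^k_\tau$ and weak compactness of $\cQ_{\bar g}$); the paper sidesteps this entirely by applying a minimax theorem (\cite[Corollary 2]{Frode}) valid for the non-compact convex cone $c\in\R_+^N$ against the weakly compact $\cQ_{\bar g}$, so the $+\infty$ case emerges automatically rather than via a separate truncation-and-limit step, and you should make sure your truncation $\|c\|\le R$ argument is completed by a compactness/semicontinuity argument (e.g.\ a Dini-type interchange of $\lim_R$ and $\inf_Q$ on the compact set $\cQ_{\bar g}$) rather than by the assertion that the $\emptyset$ case is already known.
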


\begin{theorem}[FTAP]\label{t5}
Let \asref{a1} hold. Then SNA$(\cP)$ holds, if and only if there exist $\tilde g\in\R^M$ and $\tilde h\in\R^N$ with $\tilde g<\bar g$ and $\tilde h<\bar h$, such that for any $P\in\cP$, there exists $Q\in\cQ_{\tilde g,\tilde h}$ dominating $P$.
\end{theorem}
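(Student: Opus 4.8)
Given $\tilde g<\bar g$, $\tilde h<\bar h$ with the stated domination property, I would set $\eps_g:=\bar g-\tilde g>0$ and $\eps_h:=\bar h-\tilde h>0$ and show NA$(\cP)$ w.r.t. $\bar g-\eps_g=\tilde g$ and $\bar h-\eps_h=\tilde h$, which is exactly SNA$(\cP)$. Suppose $\Phi_{\tilde g,\tilde h}(H,a,b,c,\mu)\ge 0$ $\cP$-q.s. but is not $\cP$-q.s. zero; then some $P\in\cP$ charges $\{\Phi_{\tilde g,\tilde h}>0\}$. Picking $Q\in\cQ_{\tilde g,\tilde h}$ dominating $P$, the fact that $Q\lll\cP$ gives $\Phi_{\tilde g,\tilde h}\ge 0$ $Q$-a.s., while $P\ll Q$ gives $Q(\Phi_{\tilde g,\tilde h}>0)>0$, so $\E_Q\Phi_{\tilde g,\tilde h}>0$. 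On the other hand $\E_Q[a(f-\bar f)]=0$, $\E_Q[b(g-\tilde g)]\le 0$, and $\E_Q[c(\mu(h)-\tilde h)]\le 0$ because $\E_Q[\mu^k(h^k)]\le\sup_{\tau\in\cT}\E_Q h^k_\tau\le\tilde h^k$ (the liquidating-strategy/randomized-stopping-time correspondence, as in the proof of \thref{t1}), and $\E_Q[H\cdot S]\le 0$ since $f\in L^1(Q)$ forces $H\cdot S$ to be bounded below by a $Q$-integrable random variable and hence to be a $Q$-supermartingale. Thus $\E_Q\Phi_{\tilde g,\tilde h}\le 0$, a contradiction.

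\textbf{The implication ``$\Rightarrow$'', two building blocks.} Assume SNA$(\cP)$ and fix $\eps_g,\eps_h>0$ with NA$(\cP)$ w.r.t. $\bar g-\eps_g,\bar h-\eps_h$. I would extract two families of measures. First, taking $c=0$ shows NA$(\cP)$ for the purely-European semi-static market $(S,f,g)$ with $g$ priced at $\bar g-\eps_g$, so by the corresponding FTAP (\cite{nutz2,ZZ3}) for \emph{each} $P\in\cP$ there is $Q_0^P\lll\cP$ dominating $P$ with $S$ a $Q_0^P$-martingale, $\E_{Q_0^P}f=\bar f$ and $\E_{Q_0^P}g\le\bar g-\eps_g$. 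Second, I would apply \thref{t4} to the market in which \emph{only} the American prices are lowered to $\bar h-\eps_h/2$; its hypotheses hold because the $(S,f,g)$-submarket is unchanged and \asref{a1} does not refer to $\bar h$, and since NA$(\cP)$ w.r.t. $\bar g-\eps_g,\bar h-\eps_h$ forces the sub-hedging price of the zero European option in that market to be $\le 0$, \thref{t4} yields $\cQ_{\bar g,\,\bar h-\eps_h/2}\neq\emptyset$; fix $Q_1$ in this set.

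\textbf{The implication ``$\Rightarrow$'', mixing.} For a fixed $P\in\cP$ I would form $Q_\lambda^P:=\lambda Q_1+(1-\lambda)Q_0^P$, $\lambda\in(0,1)$: a martingale measure with $\E_{Q_\lambda^P}f=\bar f$, dominated by a single element of $\cP$ (take $\cP$-members dominating $Q_1$ and $Q_0^P$; by convexity of each kernel $\cP_t$ their componentwise midpoint product measure lies in $\cP$ and dominates both), and with $P\ll Q_\lambda^P$ since $1-\lambda>0$. Componentwise, $\E_{Q_\lambda^P}g\le\lambda\bar g+(1-\lambda)(\bar g-\eps_g)=\bar g-(1-\lambda)\eps_g<\bar g$ for every $\lambda<1$, and (using $\E_{Q_\lambda^P}[\mu^k(h^k)]\le\sup_\tau\E_{Q_\lambda^P}h^k_\tau$ and boundedness of $h$) $\sup_{\tau\in\cT}\E_{Q_\lambda^P}h_\tau\le\lambda(\bar h-\eps_h/2)+(1-\lambda)\|h\|_\infty$, which lies strictly below $\bar h$ once $\lambda$ exceeds a threshold $\lambda_0<1$ depending only on $\eps_h,\bar h,\|h\|_\infty$ — not on $P$, since $\sup_\tau\E_{Q_0^P}h_\tau\le\|h\|_\infty$ uniformly in $P$. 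Then $\tilde g:=\bar g-(1-\lambda_0)\eps_g$ and $\tilde h:=\lambda_0(\bar h-\eps_h/2)+(1-\lambda_0)\|h\|_\infty$ are fixed, strictly below $\bar g,\bar h$, and $Q_{\lambda_0}^P\in\cQ_{\tilde g,\tilde h}$ dominates $P$; since $P$ was arbitrary this proves the claim.

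\textbf{Main obstacle.} The expectation estimates and the monotonicity of the payoffs in the option prices are routine. The two delicate points are: the stability of ``$\lll\cP$'' under convex combinations (needed for $Q_\lambda^P\lll\cP$), which rests on the product structure of $\cP$ and convexity of the fibres $\cP_t$; and the uniformity of the mixing weight $\lambda_0$ over all $P\in\cP$, which succeeds precisely because $g$ and $h$ are bounded, so the \emph{a priori} uncontrolled quantities $\E_{Q_0^P}g$ and $\sup_\tau\E_{Q_0^P}h_\tau$ are bounded independently of $P$. Checking that the hypotheses of \thref{t4} survive lowering the American prices is a minor point here, since that modification leaves \asref{a1} and the European submarket untouched.
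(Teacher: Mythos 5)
Your proof is correct and takes a genuinely different route from the paper's, which is worth comparing. The paper proves the necessity direction by induction on $N$, the number of American options: at the inductive step it lowers only the $n$-th American price, produces via the induction hypothesis a $Q_*\in\cQ^{n-1}_{g_*,h_*}$ dominating $P$, produces a second measure $\hat Q$ (from the $\pi^{n-1}(h^n)$ duality supplied by \thref{t4} applied to the $n-1$ market) controlling $\sup_\tau\E h^n_\tau$, and mixes with a small weight on $Q_*$. You bypass the induction entirely: your $Q_1$, obtained in one shot by applying \thref{t4} to the full market with all American prices lowered to $\bar h-\eps_h/2$ and observing $\pi_{eu}(0)\le 0<\infty$, already controls \emph{every} American constraint with slack $\eps_h/2$; your $Q_0^P$ plays the role of the domination-supplying measure and handles the European constraint; and a single uniform weight $\lambda_0$ then works for every $P$. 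The two constructions are the same in spirit — both put a weight close to $1$ on the measure controlling the American constraints and a small positive weight on the measure dominating $P$, both exploit boundedness of $h$ to make the threshold weight uniform, and both exploit convexity of the fibres $\cP_t(\omega)$ to keep the mixture $\lll\cP$ — but your version flattens the induction into a single step, at the price of invoking the $\cQ_{\bar g,\bar h-\eps_h/2}\neq\emptyset$ corollary of \thref{t4} directly rather than through the hedging duality for a single option. Minor quibbles: the FTAP for the $(S,f,g)$-submarket (\cite[Theorem 2.1(a)]{ZZ3}) gives some uniform $g_*<\bar g$ with $\E_{Q_0^P}g\le g_*$, not necessarily $g_*=\bar g-\eps_g$; and in the sufficiency direction the supermartingale property of $H\cdot S$ under $Q$ also needs $g$ to be $Q$-integrable, which follows from $\E_Q g\le\tilde g$ together with the lower bound on $g$ from \asref{a1}(2). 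Neither affects the argument.
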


\begin{remark}\label{r5}
If SNA$(\cP)$ holds, then SNA$(\cP)$ also holds when only $S,f$ and $g$ are involved, and $\cQ_{\bar g,\bar h}\neq\emptyset$ by \thref{t5}.
\end{remark}

\section{Proof of Theorems \ref{t4} and \ref{t5}}

\begin{lemma}\label{l1}
Let $\cR$ be a convex, weakly compact set of probability measures on $(\Omega_T,\mathcal{B}(\Omega_T))$. Let \asref{a1}(3) hold. Then
\begin{equation}\label{e73}
\sup_{\substack{\mu^k\in\T\\k=1,\dotso,N}}\inf_{R\in\cR}\E_R\left[\sum_{k=1}^N\mu^k(h^k)\right]=\inf_{R\in\cR}\sup_{\substack{\mu^k\in\T\\k=1,\dotso,N}}\E_R\left[\sum_{k=1}^N\mu^k(h^k)\right]=\inf_{R\in\cR}\sup_{\substack{\tau^k\in\cT\\k=1,\dotso,N}}\E_R\left[\sum_{k=1}^N h_{\tau^k}^k\right].
\end{equation}
Moreover, the infimum in the third term above is attained.
\end{lemma}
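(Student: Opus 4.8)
The plan is to reduce \eqref{e73} to three parts: the trivial inequality, an elementary identity that contributes the second and third expressions, and one genuine minimax equality that is proved by discretizing the path space and passing to the limit. The inequality $\sup_{\mu}\inf_{R}\le\inf_{R}\sup_{\mu}$ is automatic. For the second equality in \eqref{e73}, observe that for a fixed $R$ the optimizations over $\mu^1,\dots,\mu^N$ decouple, and for a single American payoff the one-to-one correspondence between $\T$ and $\T'$ together with the fact that randomization does not improve optimal stopping under a fixed measure (\cite[Proposition 1.5]{Edgar}) gives $\sup_{\mu^k\in\T}\E_R[\mu^k(h^k)]=\sup_{\tau^k\in\cT}\E_R[h^k_{\tau^k}]$; summing over $k$ and then taking $\inf_{R\in\cR}$ yields the second equality. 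Writing $f(\vec\mu,R):=\E_R[\sum_k\mu^k(h^k)]$ for $\vec\mu=(\mu^1,\dots,\mu^N)\in\T^N$, what is left is the minimax equality $\sup_{\vec\mu}\inf_{R}f\ge\inf_{R}\sup_{\vec\mu}f$ and the attainment of $\inf_{R\in\cR}\sup_{\tau^k}\E_R[\sum_k h^k_{\tau^k}]$.

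For the minimax equality I would discretize the path space. Fix $\eps>0$. Weak compactness of $\cR$ gives tightness (Prokhorov's theorem), so there is a compact $K\subset\Omega$ with $\sup_{R\in\cR}R(\{\omega_t\notin K\text{ for some }t\le T\})<\eps$; uniform continuity of the $h^k_t$ (\asref{a1}(3)) gives $\delta>0$ such that $d(\omega,\omega')<\delta$ forces $|h^k_t(\omega)-h^k_t(\omega')|<\eps$ for all $k,t$. Partition $K$ into finitely many Borel cells of diameter $<\delta$, add $\Omega\setminus K$ as one cell, and pick a representative in each, obtaining a finite $F\subset\Omega$ and a Borel map $\pi\colon\Omega\to F$ with $d(\omega,\pi(\omega))<\delta$ on $K$; let $\pi_t\colon\Omega_t\to F^t$ act coordinatewise (adapted to the product filtration), $\hat R:=(\pi_T)_*R$, $\hat h^k_t:=h^k_t|_{F^t}$, and $\hat f$ the corresponding functional on $F^T$. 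On the finite space $F^T$ the liquidating strategies form a compact convex polytope, $(\pi_T)_*\cR$ is convex and, after passing to its closure, compact, and $\hat f$ is bilinear, so von Neumann's minimax theorem gives a value $V_\eps=\sup_{\hat{\vec\mu}}\inf_{\hat R}\hat f=\inf_{\hat R}\sup_{\hat{\vec\mu}}\hat f$, with the $\sup_{\hat{\vec\mu}}$ attained. Transferring back — lifting a discrete optimizer to $\vec\mu^{*}:=\hat{\vec\mu}^{*}\circ\pi\in\T^N$ on one side, and on the other approximating an arbitrary $\vec\mu\in\T^N$ in $L^1(R)$ by strategies of the form $\hat{\vec\mu}\circ\pi$ — the diameter and tightness bounds show that both $\sup_{\vec\mu}\inf_R f$ and $\inf_R\sup_{\vec\mu}f$ differ from $V_\eps$ by at most $C\eps$, with $C$ depending only on $\|h\|_\infty,N,T$; letting $\eps\downarrow0$ proves the equality.

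For the attainment: for each fixed $R$, continuous liquidating strategies are $L^1(R)$-dense in $\T$ (built time step by time step, using density of continuous functions in $L^1(R)$ and successive truncations to preserve $\eta_t\ge0$ and $\sum_t\eta_t=1$), and $f(\cdot,R)$ is $L^1(R)$-continuous, so $G(R):=\sup_{\vec\mu}f(\vec\mu,R)$ equals the supremum over continuous $\vec\mu$; for continuous $\vec\mu$ the integrand $\sum_k\mu^k(h^k)$ is bounded and continuous, hence $R\mapsto f(\vec\mu,R)$ is weakly continuous, so $G$ is a supremum of weakly continuous functions and is weakly lower semicontinuous. Since $\cR$ is weakly compact, $\inf_{R\in\cR}G(R)$ is attained, which by the already established equalities is the infimum in the third term of \eqref{e73}.

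I expect the main obstacle to be the transfer step inside the discretization argument. Because $\cR$ is non-dominated there is no reference measure with which to topologize $\T$, so the Baxter--Chacon compactness used in the dominated case (as in \thref{t2}) is unavailable, and the discretization error must be controlled uniformly over $R\in\cR$; this seems to force choosing the partition cells to be continuity sets for the whole (weakly compact, hence separable) family $\cR$, so that $R\mapsto R(\text{cell})$ is weakly continuous, and it requires reconciling the product structure of the partition with both the adaptedness and the normalization $\sum_t\eta_t=1$ of liquidating strategies when moving strategies between the original and the discretized model — exactly the bookkeeping that the weak-compactness assumption and the framework of \cite{nutz2,ZZ3} are there to support.
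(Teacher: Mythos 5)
Your decomposition is in the right spirit --- the paper's proof is also a discretize-then-minimax argument, and your observations that $\sup_{\mu}\inf_R\le\inf_R\sup_{\mu}$ is free, that Edgar's correspondence gives the second equality pointwise in $R$, and that $R\mapsto\sup_{\vec\mu}\E_R[\sum_k\mu^k(h^k)]$ is weakly l.s.c.\ (your attainment paragraph) are all correct and used by the paper as well. The lift of a discrete optimizer $\hat{\vec\mu}^{*}\mapsto\hat{\vec\mu}^{*}\circ\pi_T$ also works, and gives $\sup_{\vec\mu}\inf_R f\ge V_\eps-C\eps$.

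The gap is the other transfer direction, $\inf_R\sup_{\vec\mu}f\le V_\eps+C\eps$. You say you would ``approximate an arbitrary $\vec\mu\in\T^N$ in $L^1(R)$ by strategies of the form $\hat{\vec\mu}\circ\pi$,'' and you flag the adaptedness/normalization bookkeeping as a concern --- but the obstruction is more serious than bookkeeping, and the continuity-set choice you propose does not address it. Conditioning $\mu_t$ on $\sigma(\pi_t)$ preserves adaptedness but destroys $\sum_t\eta_t=1$, while conditioning on $\sigma(\pi_T)$ preserves normalization but destroys adaptedness; there is no discrete liquidating strategy that reproduces $\E_{R^*}[\mu(h)]$ up to $O(\eps)$. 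Equivalently, after Edgar's correspondence the required inequality is $\sup_{\tau}\E_{R^*}[\sum_k h^k_{\tau^k}]\le\sup_{\hat\tau}\E_{(\pi_T)_*R^*}[\sum_k \hat h^k_{\hat\tau^k}]+C\eps$, and this is \emph{false} in general: the original filtration $(\cF_t)$ is strictly finer than the discretized one $(\sigma(\pi_t))$, so the Snell envelope under $R^*$ can exceed that of the pushed-forward model by an amount governed by how much the transition kernel $R^*(\,\cdot\mid\omega_1,\dots,\omega_t)$ varies within a cell --- not by the oscillation of $h$ --- and no partition fineness controls that.

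The paper avoids this by not comparing $R^*$ with its own pushforward at all. It uses a countable partition with mesh $1/n\to0$ (so the pushed-forward set $\cR_n$ is dominated by a single reference measure, enabling the minimax at each $n$ without any finite-cell truncation or tail set), and for the troublesome direction it takes a near-optimizer $R_n=\tilde R_n\circ(\xi^n)^{-1}\in\cR_n$, extracts a weak limit $\tilde R_n\to\tilde R\in\cR$ by weak compactness of $\cR$, shows $R_n\to\tilde R$ weakly because the mesh goes to zero, and then applies the lower semicontinuity of $R\mapsto\sup_\tau\E_R[h_\tau]$ (Elton's theorem --- precisely the l.s.c.\ you rediscover in your attainment paragraph) to conclude $\inf_R\sup_\tau\E_R[h_\tau]\le\sup_\tau\E_{\tilde R}[h_\tau]\le\liminf_n\sup_\tau\E_{R_n}[h_\tau]$. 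In short: the l.s.c.\ fact you derive only for the attainment is exactly what is needed to close the upper-bound transfer, and the countable-partition-plus-weak-limit device replaces the finite-partition transfer step that cannot be made to work.
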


\begin{proof}
For $\mu^k\in\T$, $\mu^k(h^k)$ may not be (semi-)continuous. Therefore, we cannot directly apply the minimax theorem for the first equality in \eqref{e73}. To overcome the difficulties coming from both the discontinuity of $\mu^k(h^k)$ and the non-dominancy of $\cR$, we will discretize $\Omega_T$ first and then take a limit.

To this end, for $n=1,2,\dotso$, let $(A_i^n)_{i\in\mathbb{N}}\subset\mathcal{B}(\Omega)$ be a countable partition of $\Omega$, such that the diameter of each $A_i^n$ is less than $1/n$. Take $\alpha_i^n\in\A_i^n$ for $i\in\mathbb{N}$, and define the map $\theta^n:\Omega\mapsto\Omega$,
$$\theta^n(\beta)=\alpha_j^n\quad\text{if}\quad\beta\in A_j^n\text{ for some }j.$$
Let $\xi^n:\ \Omega_T\mapsto\Omega_T$, such that each component of $\xi^n(\omega)$ is given by
$$(\xi^n(\omega))_t=\theta^n(\omega_t),\quad t=1,\dotso,T,\quad\omega=(\omega_1,\dotso,\omega_T)\in\Omega_T.$$
(Then $\xi^n$ can also be treated as an $(\mathcal{B}(\Omega_t))_t$-adapted process.) Let
$$\cR_n:=\{R\circ(\xi^n)^{-1}:\ R\in\cR\}.$$
We shall proceed in four steps to show \eqref{e73}.

\textbf{Step 1}. We show that
\begin{equation}\label{e76}
\limsup_{n\rightarrow\infty}\sup_{\substack{\mu^k\in\T\\k=1,\dotso,N}}\inf_{R\in\cR_n}\E_R\left[\sum_{k=1}^N\mu^k(h^k)\right]\leq\sup_{\substack{\mu^k\in\T\\k=1,\dotso,N}}\inf_{R\in\cR}\E_R\left[\sum_{k=1}^N\mu^k(h^k)\right].
\end{equation}
Fix $\eps>0$. Let $(\mu_n^1,\dotso,\mu_n^N)\in\T^N$ be such that
$$\inf_{R\in\cR_n}\E_R\left[\sum_{k=1}^N\mu_n^k(h^k)\right]\geq\sup_{\substack{\mu^k\in\T\\k=1,\dotso,N}}\inf_{R\in\cR_n}\E_R\left[\sum_{k=1}^N\mu^k(h^k)\right]-\eps.$$
Define $(\tilde\mu_n^1,\dotso,\tilde\mu_n^N)$ by $(\tilde\mu_n^k)_t=(\mu_n^k)_t\circ\xi^n$, for $t=0,\dotso,T$ and $k=1,\dotso,N$. Then it is easy to show that $(\tilde\mu_n^1,\dotso,\tilde\mu_n^N)\in\T^N$. For any $\tilde R\in\cR$, let $\tilde R_n:=\tilde R\circ (\xi^n)^{-1}\in\cR_n$. Then
$$E_{\tilde R_n}\left[\sum_{k=1}^N\mu_n^k(h^k)\right]=E_{\tilde R}\left[\sum_{k=1}^N\sum_{t=0}^T((\mu_n^k)_t\circ\xi^n)(h_t^k\circ\xi^n)\right]=E_{\tilde R}\left[\sum_{k=1}^N\sum_{t=0}^T(\tilde\mu_n^k)_t(h_t^k\circ\xi^n)\right].$$
Therefore,
$$\left|\E_{\tilde R_n}\left[\sum_{k=1}^N\mu_n^k(h^k)\right]-\E_{\tilde R}\left[\sum_{k=1}^N\tilde\mu_n^k(h^k)\right]\right|\leq\E_{\tilde R}\left[\sum_{k=1}^N\sum_{t=0}^T(\tilde\mu_n^k)_t\left|(h_t^k\circ\xi)-h_t^k\right|\right]\leq N\rho(1/n),$$
where $\rho$ is the modulus of continuity for $h$, i.e., for $t=1,\dotso,T$ and $k=1,\dotso,N$,
$$\left|h_t^k(\omega^1)-h_t^k(\omega^2)\right|\leq\rho\left(\max_{s=1,\dotso,t}|\omega_s^1-\omega_s^2|\right),\quad\omega^i=(\omega_1^i,\dotso,\omega_T^i)\in\Omega_T,\ i=1,2.$$
Hence, we have that
\begin{eqnarray}
\notag&&\sup_{\substack{\mu^k\in\T\\k=1,\dotso,N}}\inf_{R\in\cR_n}\E_R\left[\sum_{k=1}^N\mu^k(h^k)\right]-\eps\leq\inf_{R\in\cR_n}\E_R\left[\sum_{k=1}^N\mu_n^k(h^k)\right]\\
\notag&&\leq\E_{\tilde R_n}\left[\sum_{k=1}^N\mu_n^k(h^k)\right]\leq\E_{\tilde R}\left[\sum_{k=1}^N\tilde\mu_n^k(h^k)\right]+N\rho(1/n).
\end{eqnarray}
By the arbitrariness of $\tilde R$, we have that
$$\sup_{\substack{\mu^k\in\T\\k=1,\dotso,N}}\inf_{R\in\cR_n}\E_R\left[\sum_{k=1}^N\mu^k(h^k)\right]-\eps\leq\inf_{R\in\cR}\E_R\left[\sum_{k=1}^N\tilde\mu_n^k(h^k)\right]+N\rho(1/n).$$
Taking limsup on both sides above and then sending $\eps\searrow 0$, we have \eqref{e76} holds.

\textbf{Step 2}. We show that
$$\sup_{\substack{\mu^k\in\T\\k=1,\dotso,N}}\inf_{R\in\cR_n}\E_R\left[\sum_{k=1}^N\mu^k(h^k)\right]=\inf_{R\in\cR_n}\sup_{\substack{\mu^k\in\T\\k=1,\dotso,N}}\E_R\left[\sum_{k=1}^N\mu^k(h^k)\right].$$
As the domain of $(\xi^n)$ is countable, there exists a probability measure $R^*$ on the domain of $(\xi^n)$ that dominates $\cR_n$. Then we have that
\begin{eqnarray}
\notag&&\sup_{\substack{\mu^k\in\T\\k=1,\dotso,N}}\inf_{R\in\cR_n}\E_R\left[\sum_{k=1}^N\mu^k(h^k)\right]=\sup_{\substack{\mu^k\in\T\\k=1,\dotso,N}}\inf_{R\in\cR_n}\E_{R^*}\left[\frac{dR}{dR^*}\sum_{k=1}^N\mu^k(h^k)\right]\\
\notag&&=\inf_{R\in\cR_n}\sup_{\substack{\mu^k\in\T\\k=1,\dotso,N}}\E_{R^*}\left[\frac{dR}{dR^*}\sum_{k=1}^N\mu^k(h^k)\right]=\inf_{R\in\cR_n}\sup_{\substack{\mu^k\in\T\\k=1,\dotso,N}}\E_R\left[\sum_{k=1}^N\mu^k(h^k)\right],
\end{eqnarray}
where we apply the minimax theorem (see e.g., \cite[Corollary 2]{Frode}) for the second equality, and use the fact that $\T$ is compact and the map:
$$(\mu^1,\dotso,\mu^N)\mapsto\E_{R^*}\left[\frac{dR}{dR^*}\sum_{k=1}^N\mu^k(h^k)\right]$$
is continuous under the Baxter-Chacon topology (see e.g., \cite{Edgar}) w.r.t. $R^*$.

\textbf{Step 3}. We show that
\begin{equation}\label{e77}
\inf_{R\in\cR}\sup_{\substack{\tau^k\in\cT\\k=1,\dotso,N}}\E_R\left[\sum_{k=1}^N h_{\tau^k}^k\right]\leq\liminf_{n\rightarrow\infty}\inf_{R\in\cR_n}\sup_{\substack{\tau^k\in\cT\\k=1,\dotso,N}}\E_R\left[\sum_{k=1}^N h_{\tau^k}^k\right].
\end{equation}
By extracting a subsequence for the lower limit, we assume without loss of generality that the sequence $\left\{\inf_{R\in\cR_n}\sup_{\substack{\tau^k\in\cT\\k=1,\dotso,N}}\E_R\left[\sum_{k=1}^N h_{\tau^k}^k\right]\right\}$ converges. Fix $\eps>0$. Take $R_n\in\cR_n$ such that
$$\sup_{\substack{\tau^k\in\cT\\k=1,\dotso,N}}\E_{R_n}\left[\sum_{k=1}^N h_{\tau^k}^k\right]\leq\inf_{R\in\cR_n}\sup_{\substack{\tau^k\in\cT\\k=1,\dotso,N}}\E_R\left[\sum_{k=1}^N h_{\tau^k}^k\right]+\eps.$$
Let $\tilde R_n\in\cR$ be such that $R_n=\tilde R_n\circ(\xi^n)^{-1}$. As $\cR$ is weakly compact, there exists $\tilde R\in\cR$ such that up to a subsequence $\tilde R_n\xrightarrow{w}\tilde R$. Then for any bounded uniformly continuous function $\f\in\mathcal{B}(\Omega_T)$,
\begin{eqnarray}
\notag\left|\E_{R_n}\f-\E_{\tilde R}\f\right|&\leq&\left|\E_{R_n}\f-\E_{\tilde R_n}\f\right|+\left|\E_{\tilde R_n}\f-\E_{\tilde R}\f\right|\\
\notag&=&\left|\E_{\tilde R_n}(\f\circ\xi^n)-\E_{\tilde R_n}\f\right|+\left|\E_{\tilde R_n}\f-\E_{\tilde R}\f\right|\\
\notag&\leq&\E_{\tilde R_n}\left|(\f\circ\xi^n)-\f\right|+\left|\E_{\tilde R_n}\f-\E_{\tilde R}\f\right|\\
\notag&\leq&\rho_\f(1/n)+\left|\E_{\tilde R_n}\f-\E_{\tilde R}\f\right|\\
\notag&\rightarrow&0,\quad n\rightarrow\infty,
\end{eqnarray}
where $\rho_\f$ is the modulus of continuity of $\f$. Hence, $R_n\xrightarrow{w}\tilde R$. Since the map
$$R\mapsto\sup_{\tau^k\in\cT}\E_R\left[h_{\tau^k}^k\right]$$
is lower semi-continuous under weak topology (see e.g., \cite[Theorem 1.1]{Elton89}), the map
$$R\mapsto\sum_{k=1}^N\sup_{\tau^k\in\cT}\E_R\left[h_{\tau^k}^k\right]=\sup_{\substack{\tau^k\in\cT\\k=1,\dotso,N}}\E_R\left[\sum_{k=1}^N h_{\tau^k}^k\right]$$
is also lower semi-continuous. Therefore,
\begin{eqnarray}
\notag&&\lim_{n\rightarrow\infty}\inf_{R\in\cR_n}\sup_{\substack{\tau^k\in\cT\\k=1,\dotso,N}}\E_R\left[\sum_{k=1}^N h_{\tau^k}^k\right]+\eps\geq\liminf_{n\rightarrow\infty}\sup_{\substack{\tau^k\in\cT\\k=1,\dotso,N}}\E_{R_n}\left[\sum_{k=1}^N h_{\tau^k}^k\right]\\
\notag&&\geq\sup_{\substack{\tau^k\in\cT\\k=1,\dotso,N}}\E_{\tilde R}\left[\sum_{k=1}^N h_{\tau^k}^k\right]\geq\inf_{R\in\cR}\sup_{\substack{\tau^k\in\cT\\k=1,\dotso,N}}\E_R\left[\sum_{k=1}^N h_{\tau^k}^k\right].
\end{eqnarray}
Letting $\eps\searrow0$, we have \eqref{e77} holds.

\textbf{Step 4}. By steps 1-3, we have that
\begin{eqnarray}
\notag&&\sup_{\substack{\mu^k\in\T\\k=1,\dotso,N}}\inf_{R\in\cR}\E_R\left[\sum_{k=1}^N\mu^k(h^k)\right]\leq\inf_{R\in\cR}\sup_{\substack{\mu^k\in\T\\k=1,\dotso,N}}\E_R\left[\sum_{k=1}^N\mu^k(h^k)\right]\\
\notag&&=\inf_{R\in\cR}\sup_{\substack{\tau^k\in\cT\\k=1,\dotso,N}}\E_R\left[\sum_{k=1}^N h_{\tau^k}^k\right]\leq\liminf_{n\rightarrow\infty}\inf_{R\in\cR_n}\sup_{\substack{\tau^k\in\cT\\k=1,\dotso,N}}\E_R\left[\sum_{k=1}^N h_{\tau^k}^k\right]\\
\notag&&=\liminf_{n\rightarrow\infty}\inf_{R\in\cR_n}\sup_{\substack{\mu^k\in\T\\k=1,\dotso,N}}\E_R\left[\sum_{k=1}^N\mu^k(h^k)\right]=\liminf_{n\rightarrow\infty}\sup_{\substack{\mu^k\in\T\\k=1,\dotso,N}}\inf_{R\in\cR_n}\E_R\left[\sum_{k=1}^N\mu^k(h^k)\right]\\
\notag&&\leq\limsup_{n\rightarrow\infty}\sup_{\substack{\mu^k\in\T\\k=1,\dotso,N}}\inf_{R\in\cR_n}\E_R\left[\sum_{k=1}^N\mu^k(h^k)\right]\leq\sup_{\substack{\mu^k\in\T\\k=1,\dotso,N}}\inf_{R\in\cR}\E_R\left[\sum_{k=1}^N\mu^k(h^k)\right],
\end{eqnarray}
where the second and the fourth (in)equalities follows from \cite[Proposition 1.5]{Edgar}. Therefore, \eqref{e73} follows.

Finally, since the map
$$R\mapsto\sup_{\substack{\tau^k\in\cT\\k=1,\dotso,N}}\E_R\left[\sum_{k=1}^N h_{\tau^k}^k\right]$$ is lower semi-continuous, and $\cR$ is weakly compact, there exists $R^*\in\cR$ that attains the infimum of the third term in \eqref{e73}.
\end{proof}

\begin{proof}[\textbf{Proof of \thref{t4}}]
We will only prove the conclusions for American option $\phi$ (the case for European option $\psi$ is similar, and in fact slightly easier).

We have that
\begin{eqnarray}
\notag\pi_{am}(\phi)&=&\sup_{c\in\R_+^N}\sup_{\mu\in\T^N,\eta\in\T}\sup\{x\in\R:\ \exists(H,a,b), \text{ s.t. }\PPPhi+\eta(\phi)\geq x,\ \cP\text{-q.s.}\}\\
\notag&=&\sup_{c\in\R_+^N}\sup_{\mu\in\T^N,\eta\in\T}\inf_{Q\in\cQ_{\bar g}}\E_Q\left[c(\mu(h)-\bar h)+\eta(\phi)\right]\\
\label{e91}&=&\sup_{c\in\R_+^N}\inf_{Q\in\cQ_{\bar g}}\sup_{\substack{\tau\in\cT, \tau^k\in\cT\\k=1,\dotso,N}}\E_Q\left[\sum_{k=1}^N c^k(h_{\tau^k}^k-\bar h^k)+\phi_\tau\right],
\end{eqnarray}
where we apply \cite[Theorem 2.1(b)]{ZZ3} for the second equality, and \leref{l1} for the third equality. (Note that this is where we use the assumption that SNA$(\cP)$ holds when only $S,f$ and $g$ are involved.)

Now the map
$$c\mapsto\sup_{\substack{\tau\in\cT, \tau^k\in\cT\\k=1,\dotso,N}}\E_Q\left[\sum_{k=1}^N c^k(h_{\tau^k}^k-\bar h^k)+\phi_\tau\right],$$
is linear, and 
the map
$$Q\mapsto\sup_{\substack{\tau\in\cT, \tau^k\in\cT\\k=1,\dotso,N}}\E_Q\left[\sum_{k=1}^N c^k(h_{\tau^k}^k-\bar h^k)+\phi_\tau\right],$$
is lower semi-continuous (see step 3 in the proof of \leref{l1}) and convex. Thanks to the weak compactness of $\cQ_{\bar g}$, we can apply the minimax theorem (see e.g., \cite[Corollary 2]{Frode}) for \eqref{e91} and get that
\begin{eqnarray}
\label{e901} \pi_{am}(\phi)&=&\inf_{Q\in\cQ_{\bar g}}\sup_{c\in\R_+^N}\sup_{\substack{\tau\in\cT, \tau^k\in\cT\\k=1,\dotso,N}}\E_Q\left[\sum_{k=1}^N c^k(h_{\tau^k}^k-\bar h^k)+\phi_\tau\right]\\
\notag&=&\inf_{Q\in\cQ_{\bar g,\bar h}}\sup_{c\in\R_+^N}\sup_{\substack{\tau\in\cT, \tau^k\in\cT\\k=1,\dotso,N}}\E_Q\left[\sum_{k=1}^N c^k(h_{\tau^k}^k-\bar h^k)+\phi_\tau\right]\\
\notag&=&\inf_{Q\in\cQ_{\bar g,\bar h}}\sup_{\tau\in\cT}\E_Q[\phi_\tau].
\end{eqnarray}

Finally, we have that $\cQ_{\bar g,\bar h}$ is weakly compact, which implies the last statement of \thref{t4} by \leref{l1}. Indeed, for $(Q_n)_{n\in\mathbb{N}}\subset\cQ_{\bar g,\bar h}\subset\cQ_{\bar g}$, since $\cQ_{\bar g}$ is weakly compact by \asref{a1}(1), there exist $Q\in\cQ_{\bar g}$ and $(Q_{n_i})_{i\in\mathbb{N}}\subset(Q_n)_{n\in\mathbb{N}}$, such that $Q_{n_i}\xrightarrow{w} Q$. By \asref{a1}(3) and e.g., \cite[Theorem 1.1]{Elton89}, the map $R\mapsto\sup_{\tau\in\cT}\E_R\left[h_\tau^k\right]$ is lower semi-continuous for $k=1,\dotso,N$. Therefore,
$$\E_Q\left[h_\tau^k\right]\leq\liminf_{i\rightarrow\infty}\E_{Q_{n_i}}\left[h_\tau^k\right]\leq\bar h^k,\quad k=1,\dotso,N.$$
Hence, $Q\in\cQ_{\bar g,\bar h}$.
\end{proof}

\begin{proof}[\textbf{Proof of \thref{t5}}]
\textit{Sufficiency}. Assume that there exist $\tilde g<\bar g$ and $\tilde h<\bar h$, such that for any $P\in\cP$, there exists $Q\in\cQ_{\tilde g,\tilde h}$ dominating $P$. Then it is easy to show that NA$(\cP)$ holds w.r.t. $\tilde g$ and $\tilde h$, and thus SNA$(\cP)$ holds.

\textit{Necessity}. We will prove this by an induction on the number of liquid American options $N$. For $N=0$ the result follows from \cite[Theorem 2.1(a)]{ZZ3}. Now suppose the result holds for $N=n-1\in\mathbb{N}$. Let us consider $N=n$. For $k \in \{n-1,n\}$ denote NA$^k$, SNA$^k$, $\pi^k(\cdot)$ and $\cQ_{\cdot,\cdot}^k$ as the NA, SNA, subhedging price and martingale measure set defined in \eqref{e92} in terms of $S, f, g$ and $h^1,\dotso,h^k$, respectively. 

By SNA$^n(\cP)$, there exists $\hat h^n<\bar h^n$, such that  NA$^n(\cP)$ holds w.r.t. $\bar g$ and $(\bar h^1,\dotso,\bar h^{n-1},\hat h^n)$. It follows that
$$\pi^{n-1}(h^n)\leq\hat h^n,$$  
for otherwise, one would create an arbitrage by paying $\hat h^n$ to buy one unit of $h^n$ and getting $(\pi^{n-1}(h^n)+\hat h^n)/2$ via some trading strategy. As SNA$^n(\cP)$ holds, it can be seen that SNA$^{n-1}(\cP)$ also holds. Hence, by the induction hypothesis as well as \thref{t4} and \reref{r5}, we have that
\begin{equation}\label{e90}
\pi^{n-1}(h^n)=\inf_{Q\in\cQ_{\bar g,(\bar h^1,\dotso,\bar h^{n-1})}^{n-1}}\sup_{\tau\in\mathcal{T}}\E_Q[h_\tau^n]\leq\hat h^n<\bar h^n.
\end{equation}
Moreover, there exists $g_*\in\R^M$ and $h_*\in\R^{n-1}$ with $g_*<\bar g$ and $h_*<(\bar h^1,\dotso,\bar h^{n-1})$, such that for any $P\in\cP$, there exists $Q\in\cQ_{g_*, h_*}^{n-1}$ dominating $P$. 

By \asref{a1}(3), there exists $C>0$ such that $|h_t^n|<C$ for $t=0,\dotso,T$. Choose $\lambda\in(0,1)$ such that
$$\tilde h^n:=\lambda C+(1-\lambda)\frac{\hat h^n+\bar h^n}{2}<\bar h^n.$$
Now let
$$\tilde g:=\lambda g_*+(1-\lambda)\bar g,$$
and
$$\tilde h:=(\lambda h_*^1+(1-\lambda)\bar h^1,\dotso,\lambda h_*^{n-1}+(1-\lambda)\bar h^{n-1},\tilde h^n).$$
Let $P\in\cP$. We will show that there exists some $Q\in\cQ_{\tilde g,\tilde h}^n$ dominating $P$. Indeed, take $Q_*\in\cQ_{g_*,\tilde h_*}^{n-1}$ dominating $P$. By \eqref{e90}, there exists $\hat Q\in\cQ_{\bar g,(\bar h^1,\dotso,\bar h^{n-1})}^{n-1}$, such that
$$\sup_{\tau\in\mathcal{T}}\E_{\hat Q}[h_\tau^n]<\frac{\bar h^n+\hat h^n}{2}.$$
Let
$$Q_\lambda:=\lambda Q_*+(1-\lambda)\hat Q\gg P.$$
Obviously, $Q_\lambda\lll\cP$, $Q_\lambda$ is an MM, $\E_{Q_\lambda}f=\bar f$, $\E_{Q_\lambda}g\leq\tilde g$, and $\sup_{\tau\in\mathcal{T}}\E_{Q_\lambda}[h_\tau^k]\leq\tilde h^k$ for $k=1,\dotso,n-1$. Furthermore,
$$\sup_{\tau\in\cT}\E_{Q_\lambda}[h_\tau^n]=\sup_{\tau\in\cT}\left(\lambda\E_{Q_*}[h_\tau^n]+(1-\lambda)\E_{\hat Q}[h_\tau^n]\right)\leq\lambda C+(1-\lambda)\sup_{\tau\in\cT}\E_{\hat Q}[h_\tau^n]\leq\tilde h^n.$$
This implies $Q_\lambda\in\cQ_{\tilde g,\tilde h}^n$.
\end{proof}


\bibliographystyle{siam}
\bibliography{ref}

\begin{thebibliography}{10}

\bibitem{Sch3}
{\sc B.~Acciaio, M.~Beiglbock, F.~Penkner, and W.~Schachermayer}, {\em A
  model-free version of the fundamental theorem of asset pricing and the
  super-replication theorem},  (2013).
\newblock To appear in Mathematical Finance. Available as arXiv:1301.5568.

\bibitem{BHZ}
{\sc E.~Bayraktar, Y.-J. Huang, and Z.~Zhou}, {\em On hedging {A}merican
  options under model uncertainty}, SIAM Journal on Financial Mathematics, 6
  (2015), pp.~425--447.

\bibitem{ZZ3}
{\sc E.~Bayraktar, Y.~Zhang, and Z.~Zhou}, {\em A note on the fundamental
  theorem of asset pricing under model uncertainty}, Risks, 2 (2014),
  pp.~425--433.

\bibitem{ZZ7}
{\sc E.~{Bayraktar} and Z.~{Zhou}}, {\em {On a Stopping Game in continuous
  time}}, to appear in the Proceedings of the AMS.Also available on ArXiv,
  (2014).

\bibitem{ZZ6}
{\sc E.~Bayraktar and Z.~Zhou}, {\em On zero-sum optimal stopping games},
  (2014).
\newblock Preprint, arXiv:1408.3692.

\bibitem{Mathias}
{\sc M.~Beiglb{\"o}ck, P.~Henry-Labord{\`e}re, and F.~Penkner}, {\em
  Model-independent bounds for option prices--a mass transport approach},
  Finance and Stochastics, 17 (2013), pp.~477--501.

\bibitem{nutz2}
{\sc B.~Bouchard and M.~Nutz}, {\em Arbitrage and duality in nondominated
  discrete-time models}, Ann. Appl. Probab., 25 (2015), pp.~823--859.

\bibitem{Campi}
{\sc L.~Campi}, {\em A note on market completeness with {A}merican put
  options}, in Inspired by finance, Springer, Cham, 2014, pp.~73--82.

\bibitem{EQF}
{\sc R.~{Cont}}, {\em {Encyclopedia of quantitative finance. 4 Volumes.}},
  Hoboken, NJ: John Wiley \& Sons, 2010.

\bibitem{Conway}
{\sc J.~B. Conway}, {\em A course in functional analysis}, vol.~96 of Graduate
  Texts in Mathematics, Springer-Verlag, New York, second~ed., 1990.

\bibitem{Cox}
{\sc A.~M.~G. {Cox} and C.~{Hoeggerl}}, {\em {Model-independent no-arbitrage
  conditions on American put options}}, to appear in Mathematical Finance.
  ArXiv e-prints,  (2013).

\bibitem{Hobson1}
{\sc M.~H.~A. Davis and D.~G. Hobson}, {\em The range of traded option prices},
  Math. Finance, 17 (2007), pp.~1--14.

\bibitem{Sch2}
{\sc F.~Delbaen and W.~Schachermayer}, {\em The mathematics of arbitrage},
  Springer Finance, Springer-Verlag, Berlin, 2006.

\bibitem{DS3}
{\sc Y.~Dolinsky}, {\em Numerical schemes for {$G$}-expectations}, Electron. J.
  Probab., 17 (2012), pp.~no. 98, 15.

\bibitem{DS1}
{\sc Y.~Dolinsky, M.~Nutz, and H.~M. Soner}, {\em Weak approximation of
  {$G$}-expectations}, Stochastic Process. Appl., 122 (2012), pp.~664--675.

\bibitem{Edgar}
{\sc G.~A. Edgar, A.~Millet, and L.~Sucheston}, {\em On compactness and
  optimality of stopping times}, in Martingale theory in harmonic analysis and
  {B}anach spaces ({C}leveland, {O}hio, 1981), vol.~939 of Lecture Notes in
  Math., Springer, Berlin-New York, 1982, pp.~36--61.

\bibitem{Elton89}
{\sc J.~Elton}, {\em Continuity properties of optimal stopping value}, Proc.
  Amer. Math. Soc., 105 (1989), pp.~736--746.

\bibitem{Hobson}
{\sc V.~Henderson and D.~Hobson}, {\em Optimal liquidation of derivative
  portfolios}, Math. Finance, 21 (2011), pp.~365--382.

\bibitem{Sircar}
{\sc A.~{\.I}lhan and R.~Sircar}, {\em Optimal static-dynamic hedges for
  barrier options}, Math. Finance, 16 (2006), pp.~359--385.

\bibitem{Sch1}
{\sc D.~Kramkov and W.~Schachermayer}, {\em The asymptotic elasticity of
  utility functions and optimal investment in incomplete markets}, Ann. Appl.
  Probab., 9 (1999), pp.~904--950.

\bibitem{Sch}
{\sc W.~Schachermayer}, {\em Martingale measures for discrete-time processes
  with infinite horizon}, Math. Finance, 4 (1994), pp.~25--55.

\bibitem{Siorpaes}
{\sc P.~Siorpaes}, {\em Optimal investment and price dependence in a
  semi-static market}, Finance Stoch., 19 (2015), pp.~161--187.

\bibitem{Frode}
{\sc F.~Terkelsen}, {\em Some minimax theorems}, Math. Scand., 31 (1972),
  pp.~405--413 (1973).

\end{thebibliography}
\end{document}